\providecommand*{\shuffle}{%
  \mathbin{\mathpalette\shuffle@{}}%
}
\newcommand*{\shuffle@}[2]{%
  \sbox0{$#1\vcenter{}$}%
  \kern .15\ht0 
  \rlap{\vrule height .25\ht0 depth 0pt width 2.5\ht0}%
  \raise.1\ht0\hbox to 2.5\ht0{%
    \vrule height 1.75\ht0 depth -.1\ht0 width .17\ht0 %
    \hfill
    \vrule height 1.75\ht0 depth -.1\ht0 width .17\ht0 %
    \hfill
    \vrule height 1.75\ht0 depth -.1\ht0 width .17\ht0 %
  }%
  \kern .15\ht0 
}
\newcommand{\cshuffle}{\overline{\shuffle}}
\theoremstyle{plain}
\theoremstyle{definition}
\newcommand{\NP}{\textsf{NP}}
\newcommand{\itshuffle}[2]{#1^{\shuffle_{#2},\circledast}}
\title{The \emph{n}-ary Initial Literal and Literal Shuffle}
\titlerunning{The \emph{n}-ary Literal and Initial Literal Shuffle}
\author{Stefan Hoffmann}
\authorrunning{S. Hoffmann}
\institute{Informatikwissenschaften, FB IV, 
  Universit\"at Trier,
  \email{hoffmanns@informatik.uni-trier.de}}
\begin{document}
%

%
%

%
%
\maketitle              
\begin{abstract}
  The literal and the initial literal shuffle have been introduced
  to model the behavior of two synchronized processes.
  However, it is not possible to describe the synchronization of multiple processes.
  Furthermore, both restricted forms of shuffling 
  are not associative.
  Here, we extend the literal shuffle and the initial literal shuffle to multiple arguments.
  We also introduce iterated versions, 
  much different
  from the iterated ones previously introduced for the binary literal and initial literal shuffle.
  We investigate formal properties, and show that in terms of expressive power, in a full trio,
  they coincide with the general shuffle. Furthermore, we look at closure properties 
  with respect to the regular, context-free, context-sensitive, recursive and recursively 
  enumerable languages for all operations introduced.
  Then, we investigate various decision
  problems motivated by analogous problems for the (ordinary) shuffle operation.
  Most problems we look at are tractable, but we also identify one intractable decision problem.
\end{abstract}

\begin{keywords}
shuffle,  literal shuffle, initial literal shuffle, formal language theory
\end{keywords}

\begin{toappendix}

 \paragraph{General Remarks} In this appendix, for a function $f : X \to Y$
 and $Z \subseteq X$, we set $f(Z) = \{ f(z) \mid z \in Z \}$.
\end{toappendix}





\section{Motivation and Contribution}

In~\cite{DBLP:journals/actaC/Berard87,DBLP:journals/tcs/Berard87},
the \emph{initial literal shuffle} and the \emph{literal shuffle}
were introduced, by giving the following natural motivation (taken from~~\cite{DBLP:journals/actaC/Berard87,DBLP:journals/tcs/Berard87}):
\begin{quote} 
\small
[...] The shuffle operation naturally appears in several problems, like concurrency of
processes \cite{Iwana83,Nivat82,Ogden78} or multipoint communication, where all stations share a single
bus~\cite{Iwana83}. That is one of the reasons of the large theoretical literature about this
operation (see, for instance \cite{DBLP:journals/acta/ArakiT81,Ginsburg75,Iwama:1983:UPU,Iwana83,DBLP:journals/tcs/Jantzen81,DBLP:journals/jcss/Latteux79}). In the latter example [of midpoint communication], the general shuffle
operation models the asynchronous case, where each transmitter uses asynchronously
the single communication channel. If the hypothesis of synchronism is
made (step-lock transmission), the situation is modelled by what can be named
‘literal’ shuffle. Each transmitter emits, in turn, one elementary signal. The same
remark holds for concurrency, where general shuffle corresponds to asynchronism
and literal shuffle to synchronism.~[...]
\end{quote}
So, the shuffle operation corresponds to the parallel composition of words, which model instructions or event sequences of processes, i.e., \emph{sequentialized execution histories
of concurrent processes}.

In this framework, the initial literal shuffle is motivated by modelling the synchronous operation
of two processes that start at the same point in time, whereas the literal shuffle
could model the synchronous operation if started at different points in time.
However, both restricted shuffle variants are only binary operations, which are not associative.
Hence, actually only the case of \emph{two} processes synchronized to each other is modelled, i.e., the (initial) literal
shuffling applied multiple times, in any order, does not model adequately the synchronous operation
of multiple processes. 
So, the iterative versions, as introduced in~\cite{DBLP:journals/actaC/Berard87,DBLP:journals/tcs/Berard87},
are not adequate to model multiple processes, and, because of the lack of associativity,
the bracketing is essential, which is, from a mathematical point of view, somewhat unsatisfying.

Here, we built up on~\cite{DBLP:journals/actaC/Berard87,DBLP:journals/tcs/Berard87}
by extending both restricted shuffle variants to multiple arguments, which do not arise
as the combination of binary operations. So, technically, for each $n$
we have a different operation taking $n$ arguments.
With these operations, we derive iterated variants in a uniform manner.
We introduce two variants:
\begin{enumerate}
    \item[(1)] the \emph{$n$-ary initial literal shuffle},
motivated by modelling $n$ synchronous processes started at the same point in time;
 \item[(2)] the \emph{$n$-ary literal shuffle}, motivated by modelling 
$n$ synchronous processes started at different points in time.
\end{enumerate}


Additionally, in Section~\ref{sec:comm_variants}, we also introduce two additional variants
for which the results are independent of the order
of the arguments.
Hence, our variants might be used when a more precise approach is necessary than the general shuffle
can provide.

We study the above mentioned operations and their iterative variants, their relations to each other
and their expressive power. We also study their closure properties with respect
to the classical families of the Chomsky hierarchy~\cite{HopUll79}
and the recursive languages.
We also show that, when adding the
full trio operations, the expressive power of each shuffle variant is as powerful as the general shuffle 
operation. 
In terms of computational complexity, most problem we consider, which are motivated
from related decision problems for the (general) shuffle operation,
are tractable. However, we also identify a decision problem for 
the second variant that is \NP-complete.

The goal of the present work is to give an analysis of these operations
from the point of view of \emph{formal language theory}. 

\section{The Shuffle Operation in Formal Language Theory} 
\label{sec:shuffle_in_formal_lang_theory}


Beside~\cite{DBLP:journals/actaC/Berard87,DBLP:journals/tcs/Berard87}, we briefly review
other work related to the shuffle operation. 
We focus on research in formal language theory and computational complexity.

The shuffle and iterated shuffle have been introduced and studied to understand
the semantics of parallel programs. This was undertaken, as it appears
to be, independently by Campbell and Habermann~\cite{CamHab74}, by Mazurkiewicz~\cite{DBLP:conf/mfcs/Marzurkiewicz75}
and by Shaw~\cite{Shaw78zbMATH03592960}. They introduced \emph{flow expressions}, 
which allow for sequential operators (catenation and iterated catenation) as well
as for parallel operators (shuffle and iterated shuffle). 
See also~\cite{DBLP:journals/cl/Riddle79,DBLP:journals/cl/Riddle79a}
for an approach using only the ordinary shuffle, but not the iterated shuffle.
Starting from this, various subclasses of the flow expressions were
investigated~\cite{DBLP:journals/tcs/BerglundBB13,DBLP:journals/tcs/Jantzen81,DBLP:journals/tcs/Jantzen85,DBLP:journals/ipl/Jedrzejowicz83,DBLP:journals/tcs/JedrzejowiczS01,DBLP:journals/jcss/WarmuthH84,DBLP:journals/jcss/BussS14,DBLP:journals/fuin/KudlekF14}.

Beside the literal shuffle~\cite{DBLP:journals/actaC/Berard87,DBLP:journals/tcs/Berard87}, and the variants introduced in this work,
other variants and generalizations of the shuffle product were introduced.
Maybe the most versatile is \emph{shuffle by trajectories}~\cite{DBLP:journals/tcs/MateescuRS98},
whereby the selection of the letters from two input words is controlled by a given
\emph{language of trajectories} that indicates, as a binary language, at which positions
letters from the first or second word are allowed. This framework entails numerous
existing operations, from concatenation up to the general shuffle, and
in~\cite{DBLP:journals/tcs/MateescuRS98} the authors related algebraic properties
and decision procedures of resulting operations to properties of the trajectory language.
In~\cite{DBLP:journals/tcs/BeekMM05}, with a similar motivation as ours, different notions
of \emph{synchronized shuffles} were introduced.
But in this approach, two words have to ``link'', or synchronize, at a specified subword drawn
from a subalphabet (the letters, or actions, that should be synchronized),
which the authors termed the \emph{backbone}. Hence, their approach differs in that
the synchronization appears letter-wise, whereas here we synchronize position-wise, i.e., at specific points in time the actions occur together in steps, and are not merged as in~\cite{DBLP:journals/tcs/BeekMM05}.


%
 
\section{Preliminaries and Definitions}
\label{sec::preliminaries}

%
%


By $\mathbb N_0$ we denote the \emph{natural numbers} including zero.
The \emph{symmetric group}, i.e., the set
of all permutations with function composition as operation,
is $\mathcal S_n = \{ f: \{1,\ldots, n\} \to \{1,\ldots, n\} \mid \mbox{$f$ bijective} \}$.

By $\Sigma$ we  denote a finite set of symbols, called an \emph{alphabet}. The set $\Sigma^{\ast}$ denotes
the set of all finite sequences, i.e., of all words with
the concatenation operation. The finite sequence of length zero,
or the \emph{empty word}, is denoted by~$\varepsilon$. 
Subsets of $\Sigma^{\ast}$
are called \emph{languages}. 
For a given word, we denote by $|w|$
its length, and for $a \in \Sigma$ by $|w|_a$ the number of occurrences of the symbol $a$
in $w$. 
For a word $w = u_1 \cdots u_n$ with $u_i \in \Sigma$, $i \in \{1,\ldots,n\}$,
we write $w^R = u_n \cdots u_1$ for the \emph{mirror operation}.
For $L \subseteq \Sigma^*$ we set $L^+ = \bigcup_{i=1}^{\infty} L^i$
and $L^* = L^+ \cup \{\varepsilon\}$, where we set $L^1 = L$
and $L^{i+1} = \{ uv \mid u \in L^i, v \in L \}$ for $i \ge 1$.

A \emph{finite deterministic and complete automaton} will
be denoted by $\mathcal A = (\Sigma, S, \delta, s_0, F)$
with $\delta : S \times \Sigma \to S$ the state transition function, $S$ a finite set of states, $s_0 \in S$
the start state and $F \subseteq S$ the set of final states. 
The properties of being deterministic and complete are implied by the definition of $\delta$
as a total function.
The transition function $\delta : S \times \Sigma \to S$
could be extended to a transition function on words $\delta^{\ast} : S \times \Sigma^{\ast} \to S$
by setting $\delta^{\ast}(s, \varepsilon) := s$ and $\delta^{\ast}(s, wa) := \delta(\delta^{\ast}(s, w), a)$
for $s \in S$, $a \in \Sigma$ and $w \in \Sigma^{\ast}$. In the remainder we drop
the distinction between both functions and also denote this extension by $\delta$.
The \emph{language accepted} by an automaton $\mathcal A = (\Sigma, S, \delta, s_0, F)$ is
$
 L(\mathcal A) = \{ w \in \Sigma^{\ast} \mid \delta(s_0, w) \in F \}.
$
A language $L \subseteq \Sigma^{\ast}$ is called \emph{regular} if $L = L(\mathcal A)$
for some finite automaton. 


\begin{definition} The \emph{shuffle operation}, denoted by $\shuffle$, is defined by
\label{def:shuffle}
 \begin{multline*}
    u \shuffle v  = \{ w \in \Sigma^*  \mid  w = x_1 y_1 x_2 y_2 \cdots x_n y_n 
    \emph{ for some words } \\ x_1, \ldots, x_n, y_1, \ldots, y_n \in \Sigma^*
    \emph{ such that } u = x_1 x_2 \cdots x_n \emph{ and } v = y_1 y_2 \cdots y_n \},
 \end{multline*}
 for $u,v \in \Sigma^{\ast}$ and 
  $L_1 \shuffle L_2  := \bigcup_{x \in L_1, y \in L_2} (x \shuffle y)$ for $L_1, L_2 \subseteq \Sigma^{\ast}$.
\end{definition}

 \begin{example} 
  $\{ab\} \shuffle \{cd\} = \{ abcd, acbd, acdb, cadb, cdab, cabd \}$
 \end{example}

 The shuffle operation is commutative, associative and distributive over union. We will use these properties without further mention. In writing formulas
 without brackets we suppose that the shuffle operation binds stronger than the set operations,
 and the concatenation operator has the strongest binding. 
 For $L \subseteq \Sigma^*$ the \emph{iterated shuffle} is $L^{\shuffle,*} = \bigcup_{i=0}^{\infty} L^{\shuffle, i}$
 with $L^{\shuffle, 0} = \{\varepsilon\}$ and $L^{\shuffle, i + 1} = L \shuffle L^{\shuffle, i}$.
 The \emph{positive iterated shuffle} is $L^{\shuffle,+} = \bigcup_{i=1}^{\infty} L^{\shuffle, i}$.
 
\begin{toappendix}
For reference, we state the next relation between the shuffle operation
and homomorphic mappings.

\begin{lemmarep}\label{lem:shuffle_hom_inclusion}
 Let $h : \Sigma^* \to \Gamma^*$ be a homomorphism for the concanentation
 operation\footnote{So, I do not assume it is a homomorphism for the shuffle operation.}
 and $U, V \subseteq \Sigma^*$.
 Then
 $
  h(U \shuffle V) \subseteq h(U) \shuffle h(V).
 $
\end{lemmarep}
\begin{proof}
 Let $w \in U \shuffle V$.
 Then, by Definition~\ref{def:shuffle},
 $w = u_1 v_1 u_2 v_2 \cdots u_n v_n$
 for some $n \ge 1$, $u_1 u_2 \cdots u_n = u \in U$
 and $v_1 v_2 \cdots v_n = v \in V$.
 So, $$
 h(w) = h(u_1) h(v_1) h(u_2) h(v_2) \cdots h(u_n) h(v_n)
 $$
 and $h(u_1 u_2 \cdots u_n) \in h(U)$, $h(v_1 v_2 \cdots v_n) \in h(V)$.
 Hence, again using Definition~\ref{def:shuffle},
 we find $h(w) \in h(U)\shuffle h(V)$.~\qed
\end{proof}
\end{toappendix}

A \emph{full trio}~\cite{GinsburgGreibach67} is a family of languages closed under homomorphisms, inverse homomorphisms 
and intersections with regular sets. A full trio is closed under arbitrary intersection
if and only if it is closed under shuffle~\cite{Ginsburg75}. Also, by a theorem of Nivat~\cite{Nivat68}, 
a family of languages forms a full trio if and only if it is closed under
\emph{generalized sequential machine mappings (gsm mappings)}, also called \emph{finite state transductions}. 
For the definition of gsm mappings, as well as of \emph{context-sensitive} and \emph{recursively enumerable languages},
we refer to the literature, for example~\cite{HopUll79}. 
For two arguments, the \emph{interleaving operation} (or \emph{perfect shuffle}~\cite{DBLP:journals/eatcs/HenshallRS12})
was introduced
in~\cite{DBLP:journals/actaC/Berard87,DBLP:journals/tcs/Berard87}.
Here, we give a straightforward generalization for multiple, equal-length,
input words. 

\begin{definition}[$n$-ary interleaving operation]
\label{def:interleaving_operator}
 Let $n \ge 1$, $k \ge 0$, $u_1, \ldots, u_n \in \Sigma^k$.
 If $k > 0$, write  $u_i = x_1^{(i)} \cdots x_k^{(i)}$, $x_j^{(i)} \in \Sigma$ for $j \in \{1,\ldots,k\}$
 and $i \in \{1,\ldots, n\}$.
 Then we define $I : (\Sigma^k)^n \to \Sigma^{nk}$
 by
 $$
 I(u_1, \ldots, u_n) = x_1^{(1)} \cdots x_1^{(n)} x_2^{(1)} \cdots x_2^{(n)} \cdot \ldots \cdot x_k^{(1)} \cdots x_k^{(n)}.
 $$
 If $k = 0$, then $I(\varepsilon, \ldots, \varepsilon) = \varepsilon$.
\end{definition}

\begin{example} $I(aab, bbb, aaa) = abaababba$. \end{example}

If we interleave all equal-length words in given regular languages,
the resulting language is still regular.

\begin{propositionrep}
 Let $L_1, \ldots, L_n \subseteq \Sigma^*$ be regular.
 Then 
 $ 
  \{ I(u_1, \ldots, u_n) \mid \exists m \ge 0 \ \forall i \in \{1,\ldots,n\} : u_i \in L_i \cap \Sigma^m \}
 $
 is regular.
\end{propositionrep}
\begin{proof}
 Let $\mathscr A_i = (\Sigma, Q_i, \delta_i, s_i, F_i)$
 be accepting automata for $L_i$ with $i \in \{1,\ldots, n\}$.
 Construct $\mathscr A = (\Sigma, Q, \delta, s_0, F)$
 with $Q = Q_1 \times \ldots \times Q_n \times \{0, \ldots, n-1\}$
 and
 \begin{align*}
  \delta((q_1, \ldots, q_n, k), x) & = (q_1, \ldots, q_{k-1}, \delta_k(q_k, x), q_{k+1}, \ldots, q_n), (k+1) \bmod n) \\
   s_0 & = (s_1, \ldots, s_n, 0) \\
   F   & = F_1 \times \ldots \times F_n \times \{0\}.
 \end{align*}
 Then it is easy to see that $L(\mathscr A) = \{ I(u_1, \ldots, u_n) \mid \exists m \forall i \in \{1,\ldots,n\} : u_i \in L_i \cap \Sigma^m \}$.~\qed
\end{proof}

In~\cite{DBLP:journals/actaC/Berard87,DBLP:journals/tcs/Berard87},
the \emph{initial literal shuffle} and the \emph{literal shuffle}
were introduced.

\begin{definition}[\cite{DBLP:journals/tcs/Berard87,DBLP:journals/actaC/Berard87}]
 Let $U,V \subseteq \Sigma^*$.
 The \emph{initial literal shuffle} of $U$ and $V$
 is
 \[ 
  U \shuffle_1 V = \{ I(u,v)w \mid u,v,w \in \Sigma^*, |u| = |v|,
                     (uw \in U, v \in V) \mbox{ or } (u \in U, vw \in V) \}.
 \] 
 and the \emph{literal shuffle} is
 \begin{align*}  
  U \shuffle_2 V = \{ & w_1 I(u,v) w_2 \mid w_1, u,v, w_2 \in \Sigma^*, |u| = |v|, \\
   & (w_1 u w_2 \in U, v \in V) \mbox{ or } 
    (u \in U, w_1 v w_2 \in V) \mbox{ or } \\
   & (w_1 u \in U, vw_2 \in V) \mbox{ or } 
    (uw_2 \in U, w_1 v \in V) \}.
 \end{align*}
\end{definition}

\begin{example}
 $\{ abc \} \shuffle_1 \{ de \} = \{ adbec \}$,
 $\{ abc \} \shuffle_2 \{ de \} = \{ abcde, abdce, adbec, daebc, deabc \}$.
\end{example}

The following iterative variants were introduced in~\cite{DBLP:journals/actaC/Berard87,DBLP:journals/tcs/Berard87}.

\begin{definition}[\cite{DBLP:journals/tcs/Berard87,DBLP:journals/actaC/Berard87}] 
 Let $L \subseteq \Sigma^*$. For $i \in \{1,2\}$, set
 $$
   L^{\shuffle_i^*} = \bigcup_{n \ge 0} L_n, \mbox{ where } L_0 = \{\varepsilon\} \mbox{ and } L_{n+1} = L_n \shuffle_i L.
 $$
\end{definition}

The next results are stated in~\cite{DBLP:journals/actaC/Berard87,DBLP:journals/tcs/Berard87}.

\begin{proposition}[\cite{DBLP:journals/actaC/Berard87,DBLP:journals/tcs/Berard87}] 
\label{prop:trio_shuffle_lit_init} 
 Let $\mathcal L$ be a full trio. The following are equivalent:
 \begin{enumerate} 
 \item $\mathcal L$ is closed under shuffle.
 \item $\mathcal L$ is closed under literal shuffle.
 \item $\mathcal L$ is closed under initial literal shuffle.
 \end{enumerate}
\end{proposition}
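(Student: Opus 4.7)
The plan is to prove the cycle $(1) \Rightarrow (2), (1) \Rightarrow (3), (2) \Rightarrow (1), (3) \Rightarrow (1)$, relying at each step only on the full trio closure properties of $\mathcal{L}$ together with the appropriate shuffle-variant closure hypothesis.

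For the forward implications $(1) \Rightarrow (i)$ with $i \in \{2,3\}$, I introduce marked alphabet copies $\Sigma_1 = \{a_1 : a \in \Sigma\}$ and $\Sigma_2 = \{a_2 : a \in \Sigma\}$ together with the unmarking homomorphism $h(a_j) = a$. The marked languages $\bar{U} \subseteq \Sigma_1^*$ and $\bar{V} \subseteq \Sigma_2^*$ are obtained from $U$ and $V$ by inverse renaming homomorphisms and hence lie in $\mathcal{L}$; so does $\bar{U} \shuffle \bar{V}$ by hypothesis. I then encode the shape of a $\shuffle_i$-word over $\Sigma_1 \cup \Sigma_2$ by a regular language $R_i$ — for $\shuffle_1$, a strictly alternating prefix over $\Sigma_1 \cup \Sigma_2$ followed by a homogeneous tail in one $\Sigma_j^*$; for $\shuffle_2$, a homogeneous prefix, a strictly alternating middle, and a homogeneous suffix — and verify directly from the definitions that $h\bigl((\bar{U} \shuffle \bar{V}) \cap R_i\bigr) = U \shuffle_i V$.

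For the converse implications, the central idea is padding. Fix $\# \notin \Sigma$ and the homomorphism $\pi$ with $\pi(\#) = \varepsilon$ and $\pi(a) = a$ for $a \in \Sigma$. The key claim is
\begin{equation*}
  U \shuffle V \;=\; \pi\bigl(\pi^{-1}(U) \shuffle_i \pi^{-1}(V)\bigr), \qquad i \in \{1,2\}.
\end{equation*}
Granting this identity, $U \shuffle V \in \mathcal{L}$ by successive closure under $\pi^{-1}$, $\shuffle_i$, and $\pi$. The inclusion $\subseteq$ is the core of the argument: given a shuffle $z$ of $u \in U$ and $v \in V$ with $n = |u|+|v|$, I build $u', v' \in (\Sigma \cup \{\#\})^n$ by placing, at each index $j \leq n$, the letter $z_j$ into $u'$ (and $\#$ into $v'$) whenever $z_j$ originated from $u$, and conversely otherwise. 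Then $\pi(u') = u$, $\pi(v') = v$, $|u'| = |v'|$, and the non-$\#$ positions of $I(u', v')$, read left to right, spell $z$: indeed, $z_j$ sits at position $2j - 1$ of $I(u', v')$ if from $u$, or at $2j$ if from $v$, and both maps are strictly order-preserving. Choosing the tail word in the $\shuffle_i$-definition to be $\varepsilon$ then places $I(u', v')$ into $\pi^{-1}(U) \shuffle_i \pi^{-1}(V)$, and applying $\pi$ recovers $z$. The reverse inclusion $\supseteq$ follows by unfolding each case of the definition of $\shuffle_i$ and invoking Lemma~\ref{lem:shuffle_hom_inclusion}: since $I(u', v') \in u' \shuffle v'$, we get $\pi(I(u', v')) \in \pi(u') \shuffle \pi(v')$, and concatenating with the $\pi$-image of the tail produces a shuffle of an element of $U$ with an element of $V$.

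The main obstacle is the $\subseteq$ direction of the padding identity: the parity constraint imposed by $I$ (odd positions drawn from $u'$, even from $v'$) has to be reconcilable with an arbitrary interleaving pattern in $z$, which is a priori unclear. The placement scheme above resolves it cleanly, because distinct indices $j \neq j'$ of $z$ always map to distinct positions of $I(u', v')$ and both $j \mapsto 2j-1$ and $j \mapsto 2j$ are strictly order-preserving, together forcing $\pi(I(u',v')) = z$. For the $\shuffle_2$ case the $\subseteq$ direction is immediate since every $\shuffle_1$-output is also a $\shuffle_2$-output (take $w_1 = \varepsilon$); the $\supseteq$ direction requires running through the four source-distribution disjuncts in the definition of $\shuffle_2$ separately, each yielding in the same way a shuffle of an element of $U$ with an element of $V$.
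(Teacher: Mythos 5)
The paper gives no proof of this proposition---it is imported from B\'erard's papers as a cited result---but your argument is correct and uses exactly the two techniques the paper itself deploys for its $n$-ary analogues: the disjoint-marked-alphabet plus regular-intersection trick appears in the proof of Lemma~\ref{lem:intreg}, and your padding identity $U \shuffle V = \pi(\pi^{-1}(U) \shuffle_i \pi^{-1}(V))$ is precisely Equation~\eqref{eqn:init_shuffle_hom_shuffle} used in the proof of Proposition~\ref{prop:trio_shuffles_closures}. One small point of care in the forward direction: since $\shuffle_1$ is not commutative, the alternating block of $R_1$ must be anchored as $(\Sigma_1\Sigma_2)^*$ (the $I$-operator always emits the letter of the first argument first), which your ``verify directly from the definitions'' step would force in any case.
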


\begin{proposition}[\cite{DBLP:journals/actaC/Berard87,DBLP:journals/tcs/Berard87}]
\label{prop:finite_binary_iterative}
 Let $F \subseteq \Sigma^*$ be finite. Then $F^{\shuffle_1^*}$ is regular.
\end{proposition}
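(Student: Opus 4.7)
The plan is to construct a finite automaton $\mathcal A$ recognizing $F^{\shuffle_1^*}$, from which regularity follows by Kleene's theorem. Set $m = \max_{w \in F} |w|$.

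First, I would unroll one iteration. By the definition of $\shuffle_1$, any $z \in L_{n+1} = L_n \shuffle_1 F$ has the form $I(u,v)w$ with $|u| = |v|$ and either $(uw \in L_n, v \in F)$ or $(u \in L_n, vw \in F)$. The new round contributes a word $y \in F$ of length at most $m$ which is interleaved only with the first $|y|$ letters of the previous iterate; the remainder of that iterate is shifted by $|y|$ positions but otherwise unchanged. Iterating this observation, $z$ is determined by a sequence $y_1, \ldots, y_n \in F$ together with the scheduling induced by the left-associated iteration.

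I would then define $\mathcal A$ to simulate reading $z$ letter-by-letter against such a derivation. A state encodes a bounded description of the ``active'' rounds---each represented by a word $y \in F$ together with how many of its letters have already been emitted. Transitions match the input letter against the next scheduled letter from one of the active rounds, with non-determinism choosing which word from $F$ begins each new round; an exhausted round is retired, and acceptance corresponds to no round being active. Correctness of $L(\mathcal A) = F^{\shuffle_1^*}$ would then be established by induction on the number of iteration rounds, matching accepting runs bijectively with valid left-associated derivations.

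The decisive step, and the main obstacle, is bounding the amount of state information: one must show that at any moment during the reading, only a number of rounds depending solely on $F$ (rather than on the total number of iteration rounds) can still be contributing letters, so that $\mathcal A$ has finitely many states. This calls for a careful combinatorial analysis of how the left-associated iteration distributes the letters of each $w_i \in F$ across positions of the eventual output. The bound $|w| \le m$ for all $w \in F$, combined with the left-associativity of the iteration, should enable such an argument; making it precise and extracting an explicit bound in terms of $m$ and $|F|$ is where the real work lies.
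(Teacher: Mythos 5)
First, a point of reference: the paper does not prove this proposition at all --- it is quoted from B\'erard's papers as background and used as a foil for Proposition~\ref{prop:iterated_variant_non_cf} --- so there is no in-paper argument to compare against; I can only assess your proposal on its own terms. On those terms, you have correctly located the crux but not crossed it: the entire content of the theorem is the claim that only boundedly many rounds can be simultaneously active, and your proposal explicitly defers this (``making it precise \ldots is where the real work lies''). As written this is a plan, not a proof. The bound does hold, and here is how one can get it. Past the finitely many initial rounds in which the new word may exceed the accumulated word, the position map from $w^{(k)}$ to $w^{(k+1)}=I(u,y_{k+1})w$ sends $j\mapsto 2j-1$ for $j\le|y_{k+1}|$ and $j\mapsto j+|y_{k+1}|$ otherwise. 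Hence the letters of round $i$ occupy an interval whose left endpoint starts at $2$ and satisfies $s\mapsto 2s-1$ in any round that stretches the interval, while stretching is possible only while $s\le m$; so there are at most $\lceil\log_2 m\rceil$ stretching rounds, each adding at most $m$ to an interval of initial length at most $2m-2$. The final spread of each round is therefore $O(m\log m)$, and since distinct rounds occupy disjoint position sets, at most $O(m\log m)$ rounds straddle any given position. Without some such argument your automaton has no finiteness guarantee.

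There is a second gap in the automaton itself. Bounding the number of active rounds is necessary but not sufficient: your transition rule (``match the input letter against the next scheduled letter from \emph{one of} the active rounds'') reads as a free choice of which active round emits next, which would accept a bounded-width general shuffle --- a strictly larger language. For singletons $u\shuffle_1 v$ is a single word, so the sequence $y_1,\ldots,y_n$ determines the output completely and the schedule is rigid. Moreover, reading left to right one meets the rounds in \emph{reverse} iteration order ($y_1$ at position $1$, then $y_n$ at position $2$, then $y_{n-1}$, and so on), and the exact landing positions of round $i$'s letters depend on the later rounds $y_{i+1},\ldots,y_n$. So the state must record the scheduled occupant of each of the next $O(m\log m)$ positions (a bounded window, by the estimate above), not merely a multiset of active rounds with progress counters; with only the latter, your proposed bijection between accepting runs and left-associated derivations fails in the soundness direction.
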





\section{The \emph{n}-ary Initial Literal and Literal Shuffle}
\label{sec:naryvariants}

%


Here, for any number of arguments, 
we introduce both shuffle variants, define
iterated versions and state basic properties.

\begin{definition}
\label{lem:hom_char}
 Let $u_1, \ldots, u_n \in \Sigma^*$
 and $N = \max\{ |u_i| \mid i \in \{1,\ldots,n\} \}$.
 Set
 \begin{enumerate}
 \item 
 $\shuffle_1^n(u_1, \ldots, u_n) 
   = h(I(u_1 \$^{N - |u_1|}, \ldots, u_n \$^{N - |u_n|}))$ and 
 \item $
  \shuffle_2^n(u_1, \ldots, u_n) = \{ h(I(v_1, \ldots, v_n)) \mid v_i \in U_i, i \in \{1,\ldots,n\} \}$, 
 \end{enumerate}
  where $U_i = \{ \$^k u_i \$^{r - k} \mid 0 \le k \le r \mbox{ with } r = n\cdot N - |u_i| \}$
 for $i \in \{1,\ldots,n\}$
 and $h : (\Sigma \cup \{\$\})^* \to \Sigma^*$ is
 the homomorphism given by $h(\$) = \varepsilon$
 and $h(x) = x$ for $x \in \Sigma$.
 
\end{definition}

Note that writing the number of arguments in the upper index should pose no problem
or confusion with the power operator on functions, as the $n$-ary shuffle
variants are only of interest for $n \ge 2$, and raising a function to a power only makes
sense for function with a single argument.

For languages $L_1, \ldots, L_n \subseteq \Sigma^*$, we set 
\begin{align*} 
 \shuffle_1^n(L_1,\ldots, L_n) & = \bigcup_{u_1 \in L_1, \ldots u_n \in L_n} \{ \shuffle_1^n(u_1, \ldots, u_n) \} \\ 
 \shuffle_2^n(L_1,\ldots, L_n) & = \bigcup_{u_1 \in L_1, \ldots u_n \in L_n}\shuffle_2^n(u_1, \ldots, u_n).
\end{align*}

\begin{example} 
 \label{ex:shuffle_2_abbc}
 Let $u = a, v = bb, w = c$. Then
 $\shuffle_1^3(u,v,w) = abcb$
 and
  \begin{align*} 
   \shuffle_2^3(u,v,w)
    & = \{ bbac, babc, abcb, acbb, abbc,
         bbca, bcba, bcab, cbab, cabb, cbba \} \\
   \shuffle_2^3(v,u,w) & = 
    \{ bbac, babc, bacb, abcb, abbc, acbb, cbab, cbba, cabb, bcba, bbca \}
  \end{align*}
  We see $bacb \notin \shuffle_2^3(u,v,w)$, but $bacb \in \shuffle_2^3(v,u,w)$.
\end{example}

\begin{example} 
\label{ex:graphical_depiction}
Please see Figure~\ref{fig:ex_shuffle2}
for a graphical depiction of the word
\[ a_1^{(1)} a_2^{(1)} a_3^{(1)} a_4^{(1)} a_1^{(2)} a_5^{(1)} a_2^{(2)} a_6^{(1)} 
 	a_3^{(2)} a_1^{(3)} a_7^{(1)} a_4^{(2)} a_2^{(3)} a_5^{(2)} a_3^{(3)}
 	a_6^{(2)} a_7^{(2)} a_8^{(2)} a_9^{(2)} 
\]
  from
$\shuffle_2^3(u, v, w)$
with \begin{align*}
 u & = a_1^{(1)} a_2^{(1)} a_3^{(1)} a_4^{(1)} a_5^{(1)} a_6^{(1)} a_7^{(1)}, \\ 
 v & = a_1^{(2)} a_2^{(2)} a_3^{(2)} a_4^{(2)} a_5^{(2)} a_6^{(2)} a_7^{(2)} a_8^{(2)} a_9^{(2)}, \\
 w & = a_1^{(3)} a_2^{(3)} a_3^{(3)}.
 \end{align*}
\begin{figure}
	\centering
\scalebox{1.0}{
 \begin{tikzpicture}[
      start chain=1 going right,start chain=2 going right,node distance=-0.15mm
    ]
    \foreach \x in {1,2,...,7} {
        \node [draw,on chain=1] (first\x) {$a_{\x}^{(1)}$};
    } 
    
    \node[draw,on chain=1,below=of first4] {$a_1^{(2)}$};
    \foreach \x in {2,...,9} {
        \node[draw,on chain=1] (second\x) {$a_{\x}^{(2)}$};
    }    
    
    \node[draw,on chain=1,below=of second3] {$a_1^{(3)}$};
    \foreach \x in {2,...,3} {
        \node[draw,on chain=1] {$a_{\x}^{(3)}$};
    }    
    
\end{tikzpicture}
}
	\caption{
	Graphical depiction of a
 	word in $\shuffle_2^3(u, v, w)$.
 	See Example~\ref{ex:graphical_depiction}.
}
\label{fig:ex_shuffle2}
\end{figure}
\end{example}

\begin{toappendix}
In various proofs in this section, we will
need the next lemma.

\begin{lemma} 
\label{lem:shuffle_2_form}
  Let $u_1, \ldots, u_n \in \Sigma$, $\$\notin \Sigma$
 and $h : (\Sigma \cup \{\$\})^* \to \Sigma^*$ be
 the homomorphism given by $h(x) = x$ for $x \in \Sigma$
 and $h(\$) = \varepsilon$. 
 Let $N > \sum_{i=1}^n |u_i|$. Then,
 \begin{enumerate}
 \item 
 \[
  \shuffle_2^n(u_1, \ldots, u_n) = h(I(\$^* u_1 \$^* \cap \Sigma^N, \ldots, \$^* u_n \$^* \cap \Sigma^N)),
 \]

 \item 
 \[
  \shuffle_2^n(u_1, \ldots, u_n) = \bigcup_{0 \le k_1, \ldots, k_n \le N} \{ h(\shuffle_1^n(\$^{k_1} u_1, \ldots, \$^{k_n} u_n)) \}.
 \]
 
 \item 
 $
  \shuffle_2^n(u_1, \ldots, u_n)
   = h(\{ I(v_1, \ldots, v_n) \mid \exists m \forall i \in \{1,\ldots, n\} : v_i \in \$^* u_i \$^* \cap \Sigma^m \}).
 $
\end{enumerate}
\end{lemma}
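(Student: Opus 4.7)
Plan. All three identities express the same set through different parameterizations, so I would first isolate a single structural fact and then derive the three parts as corollaries. The unified picture is that $\shuffle_2^n(u_1,\ldots,u_n)$ consists of exactly those words obtained by placing each $u_i$ at some starting position $a_i\ge 0$ on a common timeline and, at each time step $j$, reading the letters $u_i[j-a_i]$ in the order $i=1,\ldots,n$ for those $i$ whose placement is active at time $j$.

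Concretely, for $v_i = \$^{a_i} u_i \$^{b_i}$ of common length $m$, unfolding Definition~\ref{def:interleaving_operator} gives that $h(I(v_1,\ldots,v_n))$ is the concatenation, over $j=1,\ldots,m$, of the letters $u_i[j-a_i]$ for the $i$'s with $a_i < j \le a_i + |u_i|$, in increasing order of $i$. Call such a $j$ an \emph{active} column. The central technical step would then be a \emph{compactification lemma}: removing inactive columns (and re-indexing the $m'$ remaining columns as $1,\ldots,m'$) produces new padded words $v_i' = \$^{\tilde a_i} u_i \$^{m'-\tilde a_i-|u_i|}$ with $h(I(v_1',\ldots,v_n')) = h(I(v_1,\ldots,v_n))$; moreover, since every active column contributes at least one of the $\sum_i|u_i|$ output letters, we have $m'\le\sum_i|u_i|$. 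Coupled with the obvious observation that right-padding all $v_i$'s uniformly with extra $\$$'s also preserves the output, this gives the \emph{stabilization} identity
\[
S_m \ :=\ \bigl\{\, h(I(v_1,\ldots,v_n)) \,\bigm|\, v_i \in \$^* u_i \$^* \cap \Sigma^m \,\bigr\}
\ =\ S_{\sum_i|u_i|}\quad\text{for every } m\ge \sum\nolimits_i|u_i|.
\]

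From here the three parts fall out quickly. Definition~\ref{lem:hom_char}(2) is exactly $S_{nN_0}$ for $N_0=\max_i|u_i|$, and since $nN_0\ge\sum_i|u_i|$ we obtain $\shuffle_2^n(u_1,\ldots,u_n)=S_{\sum_i|u_i|}$. Part~1 is immediate: the hypothesis $N>\sum_i|u_i|$ places us in the stable range, so $S_N = \shuffle_2^n(u_1,\ldots,u_n)$. Part~3 follows because applying $h$ commutes with unions, so its right-hand side equals $\bigcup_{m\ge\max_i|u_i|} S_m = S_{\sum_i|u_i|}$. For Part~2, I would unfold Definition~\ref{lem:hom_char}(1) to see that $\shuffle_1^n(\$^{k_1}u_1,\ldots,\$^{k_n}u_n)$ is a single element of the form $h(I(v_1,\ldots,v_n))$ with $v_i = \$^{k_i} u_i \$^{M-k_i-|u_i|}$ and $M=\max_j(k_j+|u_j|)$, and the outer $h$ simply ensures the output lies in $\Sigma^*$. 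The inclusion of the right-hand side into $\shuffle_2^n$ is then clear from the structural description; conversely, the compactification lemma realizes every element of $\shuffle_2^n(u_1,\ldots,u_n)$ with $\min_i k_i=0$ (shifting all $a_i$'s down only deletes leading $\$$'s under $h$) and $\max_i k_i\le\sum_i|u_i|<N$, so that the tuple $(k_1,\ldots,k_n)$ lies in $\{0,\ldots,N\}^n$.

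The main obstacle is executing the compactification lemma rigorously: one has to check that deleting the all-$\$$ columns really preserves $h(I(v_1,\ldots,v_n))$ — i.e., keep track of which $\tilde a_i$ corresponds to each original $a_i$ after re-indexing — and that the new configuration stays within $\$^* u_i \$^* \cap \Sigma^{m'}$ with $m' \le \sum_i|u_i|$. Once this is in place, the three identities are essentially three equivalent parameterizations of the same stabilized set.
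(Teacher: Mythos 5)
Your proposal is correct, and it takes a genuinely different (and more complete) route than the paper. The paper proves only part~3, declaring parts~1 and~2 ``similar'': it starts from the unproved identity $\shuffle_2^n(u_1,\ldots,u_n)=\bigcup_{k_1,\ldots,k_n\in\mathbb N_0}\{h(\shuffle_1^n(\$^{k_1}u_1,\ldots,\$^{k_n}u_n))\}$, rewrites each $\shuffle_1^n(\$^{k_1}u_1,\ldots,\$^{k_n}u_n)$ as an interleaving padded with a second erasable symbol $\&$, collapses $\&$ and $\$$ under $h$, and establishes one inclusion, calling the converse ``clear.'' You instead isolate a single stabilization lemma for the sets $S_m$ and derive all three parts uniformly from it. Both arguments rest on the same combinatorial fact --- columns consisting entirely of padding symbols are invisible under $h$, so configurations may be right-padded or compacted at will --- but your version makes the converse direction (the paper's ``clear'' step) explicit via column deletion, and, more importantly, it produces the quantitative bound $m'\le\sum_i|u_i|$, which is precisely what justifies the finite truncation at $N>\sum_i|u_i|$ in parts~1 and~2; the paper's written proof never touches this, since those parts are omitted. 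The step you flag as delicate does go through: every column inside the block occupied by $u_i$ is active by definition, so none of these columns is deleted and the block stays contiguous, giving $v_i'\in\$^*u_i\$^*$ of the reduced common length. Net effect: your route costs one extra lemma up front but actually proves the whole statement, whereas the paper's argument covers only a third of it.
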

\begin{proof} 
 We only show that last equation, the others could be shown similarly.
 First, observe that,
 \begin{align*} 
  & \{ I(v_1, \ldots, v_n) \mid \exists m \forall i \in \{1,\ldots, n\} : v_i \in \$^* u_i \$^* \cap \Sigma^m \} \\
  & = \{ I(v_1, \ldots, v_n) \mid \exists m \forall i \in \{1,\ldots, n\} \exists k: v_i = \$^{k} u_i \$^{m - |u_i| - k}, m \ge |u_i| + k \}.
 \end{align*}
 We have
 $$
  \shuffle_2^n(u_1, \ldots, u_n) = \bigcup_{k_1, \ldots, k_n \in \mathbb N_0} \{ h(\shuffle_1(\$^{k_1} u_1, \ldots, \$^{k_n} u_n)) \}.
 $$
 By Definition~\ref{lem:hom_char}, but using the additional sign $\&$
 and the homomorphism $g : \Sigma \cup \{\$, \& \} \to \Sigma \cup \{\$\}$,
 we find
 $$
  \shuffle_1(\$^{k_1} u_1, \ldots, \$^{k_n} u_n)
   = g(I(\$^{k_1} u_1 \&^{m - \max_i |u_i| + k_i}, \ldots, \$^{k_n} u_n \&^{m - \max_i |u_i| + k_i}))
 $$
 for each $m \ge \max_i |u_i| + k_i$. We can write
 \begin{align*} 
  & h(g(I(\$^{k_1} u_1 \&^{m - \max_i |u_i| + k_i}, \ldots, \$^{k_n} u_n \&^{m - \max_i |u_i| + k_i}))) \\
   & = h(I(\$^{k_1} u_1 \$^{m - \max_i |u_i| + k_i}, \ldots, \$^{k_n} u_n \$^{m - \max_i |u_i| + k_i})).
 \end{align*}
 For each $m \ge \max_i |u_i| + k_i$, setting $v_i = \$^{k_i} u_i \$^{m - \max_i |u_i| + k_i}$
 we find
 \begin{align*} 
  & I(\$^{k_1} u_1 \$^{m - \max_i |u_i| + k_i}, \ldots, \$^{k_n} u_n \$^{m - \max_i |u_i| + k_i}) \\
  & \subseteq \{ I(v_1, \ldots, v_n) \mid \exists m \forall i \in \{1,\ldots, n\} \exists k: v_i = \$^{k} u_i \$^{m - |u_i| - k}, m \ge |u_i| + k \}
 \end{align*}
 So,
 \begin{align*}
  & \bigcup_{k_1, \ldots, k_n \in \mathbb N_0} \{ h(\shuffle_1(\$^{k_1} u_1, \ldots, \$^{k_n} u_n)) \} \\
    & \subseteq h(\{ I(v_1, \ldots, v_n) \mid \exists m \forall i \in \{1,\ldots, n\} \exists k: v_i = \$^{k} u_i \$^{m - |u_i| - k}, m \ge |u_i| + k \})
 \end{align*}
 The other inclusion is clear.~\qed 
\end{proof}
\end{toappendix}

Now, we can show that the two introduced $n$-ary literal shuffle variants
generalize the initial literal shuffle and the literal shuffle from~\cite{DBLP:journals/actaC/Berard87,DBLP:journals/tcs/Berard87}.

\begin{lemmarep}
\label{lem:n_2_equals_init_and_literal_shuffle}
 Let $U, V \subseteq \Sigma^*$ be two languages. Then
 $$
  \shuffle_1(U,V) = \shuffle_1^2(U,V)
 \quad\mbox{and}\quad
  \shuffle_2(U,V) = \shuffle_2^2(U,V).
 $$
\end{lemmarep}
\begin{proof}
  Let $w \in \shuffle_1^2(U,V)$.
  Then $w = \shuffle_1^2(u,v)$ for $u \in U$, $v \in V$.
  Without loss of generality, suppose $|u| \le |v|$.
  By Definition~\ref{lem:hom_char}, we have $w = h(I(u\$^{|v| - |u|}, v))$.
  Write $v = xy$ with $|x| = |u|$
  So, 
  $$
   w = h(I(u, x) I(\$^{|v| - |u|}, y))
     = h(I(u,x)) h(I(\$^{|v| - |u|}, y)))
     = I(u,x) y.
  $$
  Hence $w \in \shuffle_1(U,V)$.
  Conversely, suppose $w \in \shuffle_1(U,V)$.
  Then $w = I(x,v)y$ with $xy \in U$, $v \in V$, $|x| = |v|$
  or $w = I(u, x)y$ with $u \in U$, $xy \in V$, $|x| = |u|$.
  In the first case, $w = h(I(xy, u\$^{|y|}))$.
  So $w \in \shuffle_1^2(U,V)$ by Definition~\ref{lem:hom_char}.
  Similarly in the other case.

  The other equality could be shown similarly to the first.~\qed
\end{proof}

We can also write the second $n$-ary variant in terms of the first and the mirror operation, as stated in the next lemma.

\begin{lemmarep}\label{lem:reduce_2_to_1}
 Let $u_1, \ldots, u_n \in \Sigma^*$.
 Then
 \[ 
   \shuffle_2^n(u_1, \ldots, u_n) = \bigcup_{ \substack{ x_1, \ldots, x_n \in \Sigma^* \\ y_1, \ldots, y_n \in \Sigma^* \\ u_i = x_i y_i }} \{ ( \shuffle_1^n(x_1^R, \ldots, x_n^R) )^R \cdot \shuffle_1^n(y_1, \ldots, y_n) \}
 \]
\end{lemmarep}
\begin{proof}
 Note that, for any $j \in \{1,\ldots, n\}$, we can write
 \begin{equation}
 \label{eqn:include_dollar_at_end}
  \shuffle_n^1(u_1, \ldots, u_n) = h(\shuffle_n^1(u_1, \ldots, u_{j-1}, u_j\$, u_{j+1},\ldots, u_n)).
 \end{equation} 
 We have
 \[ 
  \shuffle_n^2(u_1, \ldots, u_n)
   = \bigcup_{0 \le k_1, \ldots, k_n \le m} \{ h(\shuffle_n^1(\$^{k_1} u_1, \ldots, \$^{k_n} u_n)) \}.
 \]
 First, let $u = h(\shuffle_n^1(\$^{k_1} u_1, \ldots, \$^{k_n} u_n))$
 for some $0 \le k_1, \ldots, k_n \le m$.
 Without loss of generality, suppose $k_1 \ge k_2 \ge \ldots \ge k_n$.
 With the above Equation~\eqref{eqn:include_dollar_at_end}, we can write
 \[
  u = h(\shuffle_n^1(\$^{k_1} u_1 \$^m, \ldots, \$^{k_n} u_n \$^m)).
 \]
 Now, for each $i \in \{1,\ldots, n\}$, write
 \[
  \$^{k_i} u_i \$^m = \$^{k_i} x_i y_i \$^m
 \]
 with $k_i + |x_i| = k_1$ and $x_i, y_i \in \Sigma^*$. Then 
 \[
  \shuffle_n^1(\$^{k_1} u_1 \$^m, \ldots, \$^{k_n} u_n \$^m)
   = I( \$^{k_1} x_1, \ldots, \$^{k_n} x_n ) \cdot \shuffle_n^1( y_1 \$^m, \ldots, y_n \$^m ).
 \]
 Furthermore 
 \[
  I( \$^{k_1} x_1, \ldots, \$^{k_n} x_n )
   = I( x_1^R \$^{k_1}, \ldots, x_n^R \$^{k_n})^R.
 \]
 Hence
 \begin{align*} 
     u &  = h(\shuffle_n^1(\$^{k_1} u_1 \$^m, \ldots, \$^{k_n} u_n \$^m)) \\
       &  = h( I( x_1^R \$^{k_1}, \ldots, x_n^R \$^{k_n})^R \cdot \shuffle_n^1( y_1 \$^m, \ldots, y_n \$^m )) \\
       &  = h( I( x_1^R \$^{k_1}, \ldots, x_n^R \$^{k_n})^R ) \cdot h( \shuffle_n^1( y_1 \$^m, \ldots, y_n \$^m ) ) \\
       &  = h( \shuffle_n^1( x_1^R \$^{k_1}, \ldots, x_n^R \$^{k_n})^R ) \cdot h( \shuffle_n^1( y_1 \$^m, \ldots, y_n \$^m ) ) \\
       &  = h( \shuffle_n^1( x_1^R \$^{k_1}, \ldots, x_n^R \$^{k_n}) )^R \cdot h( \shuffle_n^1( y_1 \$^m, \ldots, y_n \$^m ) ) \\
       &  = \shuffle_n^1( x_1^R, \ldots, x_n^R )^R \cdot \shuffle_n^1( y_1, \ldots, y_n).
 \end{align*}

 Conversely, let $u = \shuffle_n^1( x_1^R, \ldots, x_n^R )^R \cdot \shuffle_n^1( y_1, \ldots, y_n)$
 with $u_i = x_i y_i$ for $\in \{1,\ldots, n\}$.
 Then, choosing numbers $0 \le k_1, \ldots, k_n \le m$
 such that $|x_i + k_i| = \max\{ |x_j| \mid j \in \{1,\ldots,n\}\}$ we can read
 the above equation in the reverse order and find
 \[
  u = h(\shuffle_n^1(\$^{k_1} u_1 \$^m, \ldots, \$^{k_n} u_n \$^m)) 
    = h(\shuffle_n^1(\$^{k_1} u_1, \ldots, \$^{k_n} u_n)) \in \shuffle_n^2(u_1, \ldots, u_n).
 \] 
\end{proof}

With these multiple-argument versions, we define an ``iterated'' version, where iteration
is not meant in the usual sense because of the lack of associativity
for the binary argument variants.

\begin{definition} 
\label{def:iterated_versions}
Let $L \subseteq \Sigma^*$
be a language. Then,  for $i \in \{1,2\}$, define
 \begin{align*}
   \itshuffle{L}{i}  = \{ \varepsilon \} \cup \bigcup_{n \ge 1} \shuffle_i^n(L, \ldots, L).
 \end{align*}

\end{definition}

For any $i \in \{1,2\}$, as $\shuffle_i^1(L) = L$,
we have $L \subseteq \itshuffle{L}{i}$.
%
%
%
%
%
Now, let us investigate some properties of the operations under consideration.

\begin{propositionrep}
\label{prop:properties_shuffles}
 Let $L_1, \ldots, L_n \subseteq \Sigma^*$ and $\pi : \{1,\ldots,n\} \to \{1,\ldots,n\}$ a permutation. Then 
 \begin{enumerate}
 \item \label{prop:properties_shuffles:concat_in_shuffle_2}
  $
   L_{\pi(1)} \cdots L_{\pi(n)} \subseteq \shuffle_2^n(L_1, \ldots, L_n).
  $
 \item \label{prop:properties_shuffles:cyclic_permutation}
 Let $k \in \mathbb N_0$. Then $
  \shuffle_2^n(L_1, \ldots, L_n) = \shuffle_2^n(L_{((1 + k-1)\bmod n) + 1}, \ldots ,L_{((n + k-1)\bmod n) + 1}).
 $
 \item \label{prop:properties_shuffles:shuffle_1_in_shuffle_2}
  $\shuffle_1^n(L_1, \ldots, L_n) \subseteq \shuffle_2^n(L_1, \ldots, L_n) \subseteq L_1 \shuffle \cdots \shuffle L_n$;
 \item \label{prop:properties_shuffles:Kleene_in_it_shuffle_2} $L_1^* \subseteq \itshuffle{L_1}{2}$;
 \item $\Sigma^* = \itshuffle{\Sigma}{i}$ for $i \in \{1,2\}$; 
 \item $\itshuffle{L_1}{1} \subseteq \itshuffle{L_1}{2} \subseteq L_1^{\shuffle,*}$;
 \item for $u_1, \ldots, u_n, u \in \Sigma^*$, 
  if $u \in \shuffle_i^n(\{u_1\}, \ldots, \{u_n\})$, then $|u| = |u_1| + \cdots + |u_n|$.
 \end{enumerate}
\end{propositionrep}
\begin{proofsketch}
 We only give a rough outline for Property~\ref{prop:properties_shuffles:cyclic_permutation}.
 Let $x_{i,j} \in \Sigma$ for $i \in \{1,\ldots,n\}$, $j \in \{1,\ldots,m\}$.
 Then, the main idea is to use the equations
 \begin{align*}
    & h( I(x_{1,1} \cdots x_{1,m},  x_{2,1} \cdots x_{2,m},  \ldots, x_{n,1} \cdots x_{n,m} ) ) \\
    & = h( ( x_{1,1} \cdots x_{n,1} ) ( x_{1,2} \cdots x_{n,2} )  \cdots  ( x_{1,m} \cdots x_{n,m} ) ) \\
    & = h( \$^n ( x_{1,1} \cdots x_{n,1} ) ( x_{1,2} \cdots x_{n,2} )  \cdots ( x_{1,m} \cdots x_{n,m} ) ) \\
    & = h( ( \$^{n-1} x_{1,1} ) ( x_{2,1} \cdots x_{n,1} x_{1,2} )  \cdots  ( x_{2,m} \cdots x_{n,m} \$ )) \\ 
    & = h( I( \$ x_{2,1} \cdots x_{2,m}, \$ x_{3,1} \cdots x_{3,m}, \ldots, \$ x_{n,1} \cdots x_{n,m}, x_{1,1} \cdots x_{1,m} \$ ))
 \end{align*}
 and
  $
  \shuffle_2^n(u_1, \ldots, u_n)
   = h(\{ I(v_1, \ldots, v_n) \mid \exists m \forall i \in \{1,\ldots, n\} : v_i \in \$^* u_i \$^* \cap \Sigma^m \})
 $
 with $h : (\Sigma \cup \{\$\})^* \to \Sigma^*$ as in Definition~\ref{lem:hom_char}.~\qed
\end{proofsketch}
\begin{proof} 
 We will use Lemma~\ref{lem:shuffle_2_form}
 at various places without special mentioning.

\begin{enumerate}
\item 
 For $u_1, \ldots, u_n \in \Sigma^*$, we have
 \begin{align*} 
  u_1 u_2 \cdots u_n = h(\shuffle_1^n(u_1, \$^{|u_1|} u_2, \$^{|u_1|+|u_2|} u_3, \ldots, \$^{|u_1|+\ldots + |u_{n-1}|} u_n).
 \end{align*}
 And, in general, for  $\pi : \{1,\ldots,n\} \to \{1,\ldots,n\}$,
 setting\footnote{With the definition $\sum_{j=1}^0 a_i = 0$.} 
 \[ 
  v_i = \$^{\sum_{j=1}^{\pi^{-1}(i)-1} |u_{\pi^{-1}(j)}|} 
 \]
 for $i \in \{1,\ldots,n\}$,
 we find
 \[ 
  u_{\pi(1)} u_{\pi(2)} \cdots u_{\pi(n)} 
   =  h(\shuffle_1^n(v_1 u_1, \ldots, v_n u_n)).
 \]
\item Let $u \in \shuffle_2^n(u_1, \ldots, u_n)$ with $\{ u_1, \ldots, u_n \}\subseteq L$.
 By Definition~\ref{lem:hom_char}, we have
 \[
  u = h(I(v_1, \ldots, v_n))
 \]
 with $m = |v_1| = \ldots = |v_n|$ and 
 $v_i = \$^{k_i} u_i \$^{r_i - k_i}$ with $r_i = m - |u_i|$ and $0 \le k_i \le r_i$.
 Write $v_i = x_{i,1} \cdots x_{i,m}$ with $x_{i,j} \in \Sigma \cup \{\$\}$
 for $j \in \{1,\ldots,m\}$.
 Then,
 \[
  u = h( ( x_{1,1} \cdots x_{n,1} ) ( x_{1,2} \cdots x_{n,2} ) \cdot \ldots \cdot ( x_{1,m} \cdots x_{n,m} ) ), 
 \]
 i.e., reading the matrix $( x_{i,j} )_{1 \le i \le n, 1 \le j \le m}$
 column-wise and concatenating the result.
 We have
 \begin{align*}
       u & = h( \$^n ( x_{1,1} \cdots x_{n,1} ) ( x_{1,2} \cdots x_{n,2} ) \cdot \ldots \cdot ( x_{1,m} \cdots x_{n,m} ) ) \\
       & = h( ( \$^{n-1} x_{1,1} ) ( x_{2,1} \cdots x_{n,1} x_{1,2} ) \cdot \ldots \cdot  ( x_{2,m} \cdots x_{n,m} \$ )). 
 \end{align*}
 The second line results, by shifting the parenthesis one step to the right, removing the initial dollar sign,
 and appending one dollar sign to the end.
 The way the parenthesis are set could be interpreted as reading a different matrix, namely
 \[ 
  \begin{pmatrix}
   \$ & x_{2,1} & x_{2,2} & \cdots & x_{2,m-1} & x_{2,m} \\
   \$ & x_{3,1} & x_{3,2} & \cdots & x_{3,m-1} & x_{3,m} \\
       \vdots & & & \ddots & & \vdots \\ 
   \$ & x_{n,1} & x_{n,2} & \cdots & x_{n,m-1} & x_{n,m} \\
   x_{1,1} & x_{1,2} & x_{1,3} & \cdots & x_{1,m} & \$
  \end{pmatrix}
 \]
 column-wise and appending the result.
 So, we can write
 \begin{align*}
  u & = h( ( \$^{n-1} x_{1,1} ) ( x_{2,1} \cdots x_{n,1} x_{1,2} ) \cdot \ldots \cdot  ( x_{2,m} \cdots x_{n,m} \$ )) \\
    & = h(I( \$ v_2, \$ v_3, \ldots, \$ v_n, v_1 \$ )).
 \end{align*}
 Set $w_i = \$ v_{i+1} = \$^{k_i + 1} u_{i+1} \$^{r_i - k_i}$
 for $i \in \{1,\ldots,n-1\}$, $w_{n} = v_1 \$ = \$^{k_1} u_1 \$^{r_1 - k_1 + 1}$.
 We have, as $|w_1| = \ldots = |w_n| = m + 1$,
 \[
  w_i = \left\{ 
  \begin{array}{ll}
   \$^{k_i + 1} u_{i+1} \$^{m + 1 - |u_i| - (k_i + 1)} & \mbox{ for } i \in \{1,\ldots,n-1\}; \\
   \$^{k_1} u_1 \$^{m + 1 - |u_1| - k_1} & \mbox{ otherwise}.
  \end{array}
  \right.
 \]
 As $w_i \in \$^* u_{i+1} \$^* \cap \Sigma^{m+1}$ for $i \in \{1,\ldots,n-1\}$
 and $w_n \in \$^* u_1 \$^* \cap \Sigma^{m+1}$,
 by Lemma~\ref{lem:shuffle_2_form},
 we find
 $$
  u = h(I(w_1, \ldots, w_n) \in  \shuffle_2^n(u_2, \ldots, u_n, u_1).
 $$
 So, as $u$ was chosen arbitrarily, $\shuffle_2^n(u_1, u_2, \ldots, u_n) \subseteq \shuffle_2^n(u_2, \ldots, u_n, u_1)$.
 As the words $u_1, \ldots, u_n$ were chosen arbitrarily, we can cyclically permute them further, until
 we again reach $\shuffle_2^n(u_1, u_2, \ldots, u_n)$, which shows the equality.

 %

\item Suppose $u_1, \ldots, u_n \in \Sigma^*$.
 By Definition~\ref{lem:hom_char},
 that we have $\shuffle_1(u_1, \ldots, u_n) \in \shuffle_2^n(u_1, \ldots, u_n)$
 is obvious, as we can pad the symbols $u_i$, $i\in \{1,\ldots,n\}$,
 in the definition with additional dollar signs.
 By extension to languages, we get the first inclusion.
 
 Next, note that $\shuffle_1^n(u_1, \ldots, u_n) \in u_1 \shuffle \ldots \shuffle u_n$.
 Then, with Lemma~\ref{lem:shuffle_hom_inclusion},
 \begin{align*} 
  \{ h(\shuffle_1^n(\$^{k_1} u_1, \ldots, \$^{k_n} u_n)) \} 
  & \subseteq h(\$^{k_1} u_1 \shuffle \ldots \shuffle \$^{k_n} u_n) \\
  & \subseteq h( (\$^{k_1} u_1) \shuffle \ldots \shuffle (\$^{k_n} u_n)) \\ 
  & = u_1 \shuffle \ldots \shuffle u_n.
 \end{align*}
 So, as every word
 in $\shuffle_2^n(u_1, \ldots, u_n)$
 could be written in the form 
 \[ 
 h(\shuffle_1^n(\$^{k_1} u_1, \ldots, \$^{k_n} u_n))
 \]
 for number $k_i \ge 0$ with $i \in \{1,\ldots,n\}$,
 we find $\shuffle_2^n(u_1, \ldots, u_n) \subseteq u_1 \shuffle \ldots \shuffle u_n$.
 Hence, these equations are also true by their extensions
 to languages and we get the other inclusion.

\item By Property~\ref{prop:properties_shuffles:concat_in_shuffle_2} of Proposition~\ref{prop:properties_shuffles}
 shown previously.

\item With the previous Property~\ref{prop:properties_shuffles:Kleene_in_it_shuffle_2}
 of Proposition~\ref{prop:properties_shuffles},
 we find $\Sigma^* \subseteq \itshuffle{\Sigma}{2}$.
 Further, we have
\begin{align*} 
 \shuffle_1^n(\Sigma, \ldots, \Sigma)
  & = \{ \shuffle_1^n(x_1, \ldots, x_n) \mid x_1, \ldots, x_n \in \Sigma \} \\
  & = \{ x_1 \cdots x_n \mid x_1, \ldots, x_n \in \Sigma \} \\
  & = \Sigma^n.
\end{align*}
So $\Sigma^* \subseteq \itshuffle{\Sigma}{1}$.
As $\itshuffle{\Sigma}{i} \subseteq \Sigma^*$
for each $i \in \{1,2\}$, we have equality.

\item This is implied by Property~\ref{prop:properties_shuffles:shuffle_1_in_shuffle_2}
 of this Proposition~\ref{prop:properties_shuffles}.
 
\item By Property~\ref{prop:properties_shuffles:shuffle_1_in_shuffle_2}, both shuffle variants
 are included in the general shuffle, and the conclusion holds
 for the general shuffle. More specifically, for the (general) shuffle, choosing
 the non-erasing homomorphism $h : \Sigma^* \to a^*$ which maps
 every symbol in $\Sigma$ to $a$, 
 and using Lemma~\ref{lem:shuffle_hom_inclusion},
 for $u,v \in \Sigma^*$
 we have, as in the unary case concatenation and shuffle coincide,
 $$
  h(u \shuffle v) \subseteq h(u) \shuffle h(v) = h(u) \cdot h(v) = \{ a^{|u| + |v|} \}, 
 $$
 which, by the way, implies equality above.
 So, if $w \in u\shuffle v$, then $a^{|w|} = h(w) = a^{|u| + |v|}$,
 which implies $|w| = |u| + |v|$.
\end{enumerate}
 So, we have proven all properties.~\qed
\end{proof}

\begin{remark} 
 By the first two properties, all permutations of concatenations of arguments
 are in $\shuffle_2^n$, but this shuffle variant itself is only invariant
 under a cyclic permutation of its arguments.
 Note that Example~\ref{ex:shuffle_2_abbc} shows that it is not invariant
 under arbitrary permutations of its arguments.
 For $n = 2$, where it equals the literal shuffle by Lemma~\ref{lem:n_2_equals_init_and_literal_shuffle},
 as interchanging is the only non-trivial permutation, which is a cyclic one,
 this product is commutative, as was noted in~\cite{DBLP:journals/actaC/Berard87,DBLP:journals/tcs/Berard87}.
 But as shown above, this property only extends to cyclic permutation for more than two arguments.
\end{remark}

With both iterated variants we can describe languages that are not context-free,
as shown by the next proposition. Comparing Proposition~\ref{prop:iterated_variant_non_cf}
with Proposition~\ref{prop:finite_binary_iterative}, 
we see that these operations are more powerful than the iterated initial literal shuffle
from~\cite{DBLP:journals/actaC/Berard87,DBLP:journals/tcs/Berard87}, in
the sense that we can leave the family of regular languages for finite input languages.

\begin{propositionrep} 
\label{prop:iterated_variant_non_cf}
 $
  \itshuffle{(abc)}{1}
   = \itshuffle{(abc)}{2} \cap a^* b^* c^* 
   = \{ a^m b^m c^m \mid m \ge 0 \}. 
 $
\end{propositionrep}
\begin{proof} 
 Note that $\shuffle_1^n(abc,\ldots,abc) = a^n b^n c^n$,
 so $\{ a^m b^m c^m \mid m \ge 0 \} = \itshuffle{(abc)}{1}$.
 By Proposition~\ref{prop:properties_shuffles},
 $\{ a^m b^m c^m \mid m \ge 0 \} \subseteq \itshuffle{(abc)}{2} \cap a^* b^* c^*$.
 By applying the projection homomorphism $\pi_a : \Sigma^* \to \{a\}^*$
 for $a \in \Sigma$ given by $\pi_a(a) = a$ and $\pi_a(x) = \varepsilon$
 for $x \in \Sigma \setminus \{a\}$
 and Lemma~\ref{lem:shuffle_hom_inclusion},
 we find that, for $u,v \in \Sigma^*$,
 for any $w \in u \shuffle v$ we have $|w|_a = |u|_a + |u|_a$ for any $a \in \Sigma$.
 Hence, with Proposition~\ref{prop:properties_shuffles},
 as $\shuffle_2^n(u_1, \ldots, u_n) \subseteq u_1 \shuffle \ldots \shuffle u_n$ for $u_1, \ldots, u_n \in \Sigma^*$,
 we can deduce that $|\shuffle_2^n(abc, \ldots, abc)|_x = n$
 for $x \in \{a,b,c\}$.
 So, $\shuffle_2^n(abc, \ldots, abc) \cap a^* b^* c^* = \{ a^n b^n c^n \}$
 and we find that $\itshuffle{(abc)}{2} \cap a^* b^* c^* = \{ a^m b^m c^m \mid m \ge 0 \}$.~\qed
\end{proof}

\section{Closure Properties}
\label{sec:closure_properties}

We first show that, when adding the full trio operations, the iterated version
of our shuffle variants are as powerful as the shuffle, or as powerful
as the iterated variants of the binary versions of the initial literal and literal shuffle introduced in~\cite{DBLP:journals/actaC/Berard87,DBLP:journals/tcs/Berard87}
by Proposition~\ref{prop:trio_shuffle_lit_init}.
But before, and for establishing our closure properties,
we state the next result.

\begin{lemmarep} 
\label{lem:intreg} 
 Let $\mathcal L$ be a family of languages.
 If $\mathcal L$ is closed under intersection with regular
 languages, isomorphic mappings and (general) shuffle, then $\mathcal L$
 is closed under each shuffle variant $\shuffle_i^n$
 for $i \in \{1,2\}$.
\end{lemmarep}
\begin{proof}
 Suppose $\mathcal L$ is closed under intersection with regular
 languages, isomorphic mappings and shuffle.
 For the sake of notational simplicity, we shall give the proof, which
 is perfectly general, in the special case $n = 3$.
 Let $L_1, L_2, L_3 \in \mathcal L$ be three languages over the disjoint
 alphabets $\Sigma_1, \Sigma_2, \Sigma_3$.
 This assumption is no restriction, as by using isomorphisms of alphabets, i.e.,
 homomorphism induced by bijections between two alphabets, which are themselves
 isomorphisms,
 we can always map languages to corresponding languages over disjoint alphabets, perform
 our operations, which are preserved under isomorphisms, and finally map them back.
 First, we have
 \[
   \shuffle_1^3(L_1, L_2, L_3) = (L_1 \shuffle L_2 \shuffle L_3) \cap U,
 \]
 where 
 $
  U = (\Sigma_1 \Sigma_2 \Sigma_3)^* ( (\Sigma_{1} \Sigma_{2})^* \cup (\Sigma_{1} \Sigma_{3})^* \cup (\Sigma_{2} \Sigma_{3})^* )
      (\Sigma_1^* \cup \Sigma_2^* \cup \Sigma_3^*).
 $
 For the other shuffle operation, set (for all cases that words in $L_1, L_2, L_3$ can ``overlap'')
 \begin{align*} 
  S  = & \, \Sigma_1^* (( \Sigma_1 \Sigma_2)^* \cup (\Sigma_1 \Sigma_3)^*)(\Sigma_1 \Sigma_2 \Sigma_3)^*(( \Sigma_1 \Sigma_2)^* \cup (\Sigma_1 \Sigma_3)^*)\Sigma_1^* \\
     \cup  & \, \Sigma_2^* (( \Sigma_1 \Sigma_2)^* \cup (\Sigma_2 \Sigma_3)^*)(\Sigma_1 \Sigma_2 \Sigma_3)^*(( \Sigma_1 \Sigma_2)^* \cup (\Sigma_2 \Sigma_3)^*)\Sigma_2^* \\ 
     \cup & \,  \Sigma_3^* (( \Sigma_1 \Sigma_3)^* \cup (\Sigma_2 \Sigma_3)^*)(\Sigma_1 \Sigma_2 \Sigma_3)^*(( \Sigma_1 \Sigma_3)^* \cup (\Sigma_2 \Sigma_3)^*)\Sigma_3^*, \\
  T = &  \,\Sigma_1^*(\Sigma_1 \Sigma_2)^* \Sigma_1^* (\Sigma_1 \Sigma_3)^* \Sigma_1^* \cup \Sigma_1^*(\Sigma_1 \Sigma_3)^* \Sigma_1^* (\Sigma_1 \Sigma_2)^* \Sigma_1^*\\
    \cup & \, \Sigma_2^*(\Sigma_1 \Sigma_2)^* \Sigma_2^* (\Sigma_2 \Sigma_3)^* \Sigma_2^* \cup \Sigma_2^*(\Sigma_2 \Sigma_3)^* \Sigma_2^* (\Sigma_1 \Sigma_2)^* \Sigma_2^* \\
    \cup & \, \Sigma_3^*(\Sigma_1 \Sigma_3)^* \Sigma_3^* (\Sigma_2 \Sigma_3)^* \Sigma_3^* \cup \Sigma_3^*(\Sigma_2 \Sigma_3)^* \Sigma_3^* (\Sigma_1 \Sigma_3)^* \Sigma_3^*, \\
  R =    & \, \Sigma_1^*(\Sigma_1 \Sigma_2)^* (\Sigma_1 \Sigma_2 \Sigma_3)^* (\Sigma_2 \Sigma_3)^* (\Sigma_2^* \cup \Sigma_3^*) \\ 
    \cup & \, \Sigma_2^*(\Sigma_1 \Sigma_2)^* (\Sigma_1 \Sigma_2 \Sigma_3)^* (\Sigma_1 \Sigma_3)^* (\Sigma_1^* \cup \Sigma_3^*) \\
    \cup & \, \Sigma_3^*(\Sigma_1 \Sigma_3)^* (\Sigma_1 \Sigma_2 \Sigma_3)^* (\Sigma_2 \Sigma_3)^* (\Sigma_2^* \cup \Sigma_3^*) \\
    \cup & \, \Sigma_1^*(\Sigma_1 \Sigma_3)^* (\Sigma_1 \Sigma_2 \Sigma_3)^* (\Sigma_2 \Sigma_3)^* (\Sigma_2^* \cup \Sigma_3^*) \\ 
    \cup & \, \Sigma_2^*(\Sigma_2 \Sigma_3)^* (\Sigma_1 \Sigma_2 \Sigma_3)^* (\Sigma_1 \Sigma_3)^* (\Sigma_1^* \cup \Sigma_3^*) \\
    \cup & \, \Sigma_3^*(\Sigma_2 \Sigma_3)^* (\Sigma_1 \Sigma_2 \Sigma_3)^* (\Sigma_1 \Sigma_2)^* (\Sigma_1^* \cup \Sigma_2^*) \\
  P = & \, \bigcup_{\substack{\pi : \{1,2,3\}\to\{1,2,3\} \\ \mbox{ a permutation }}} \Sigma_{\pi(1)}^* \Sigma_{\pi(2)}^* \Sigma_{\pi(3)}^*.
 \end{align*}
 Then, we have
 \[
 \shuffle_2^3(L_1, L_2, L_3) = (L_1 \shuffle L_2 \shuffle L_3) \cap (S \cup T \cup R \cup P).
 \] 
 Hence, the can write our shuffle variants with the general shuffle and intersection
 by a suitable regular language.~\qed
\end{proof}


With the previous lemma, we can derive the next result.

\begin{propositionrep} 
\label{prop:trio_shuffles_closures}
 Let $\mathcal L$ be a full trio. The following properties are equivalent:
 \begin{enumerate}
 \item $\mathcal L$ is closed under shuffle.
 \item $\mathcal L$ is closed under $\shuffle_i^n$
 for some $i \in\{1,2\}$ and $n \ge 2$.
 \end{enumerate}
\end{propositionrep}
\begin{proof}
 First suppose $\mathcal L$ is a full trio that is closed under shuffle.
 Then, by Lemma~\ref{lem:intreg} it is closed
 under each operation $\shuffle_n^i$ for $i \in \{1,\ldots, n\}$.
 Conversely, suppose $\mathcal L$ is closed under $\shuffle_i^n$
 for $i \in \{1,\ldots, 4\}$ with $n \ge 2$.
 We distinguish several cases. Note~\cite{DBLP:journals/actaC/Berard87,DBLP:journals/tcs/Berard87} 
     that, for the homomorphism $h : (\Sigma \cup \{\$\})^* \to \Sigma^*$
     given by $h(x) = x$ for $x \in \Sigma$ and $h(\$) = \varepsilon$ with $\$ \notin \Sigma$,
     we have
     \begin{equation}\label{eqn:init_shuffle_hom_shuffle}
      L_1 \shuffle L_2 = h(\shuffle_1^2(h^{-1}(L_1), h^{-1}(L_2))). 
     \end{equation}
 Let $i \in \{1,2\}$ and $L_1, L_2 \in \mathcal L$.
    We have $\shuffle_i^n(L_1, L_2, \{ \varepsilon \}, \ldots, \{\varepsilon\}) = \shuffle_i^2(L_1, L_2)$
    and the results follow by Lemma~\ref{lem:n_2_equals_init_and_literal_shuffle}
    and Proposition~\ref{prop:trio_shuffle_lit_init}.
   
 So, in each case, we can represent the shuffle, which gives our claim.~\qed
\end{proof}

In a full trio, we can express the iterated shuffle
with our iterated versions of the $n$-ary shuffle variants. 

\begin{propositionrep}
\label{prop:hom_shuffles}
 Let $L \subseteq \Sigma^*$ be a language, $\$ \notin \Sigma$,
 and $h : (\Sigma \cup \$)^* \to \Sigma^*$ the homomorphism
 defined by: $h(x) = x$ if $x \in \Sigma$, $h(\$) = \varepsilon$.
 Then, for $i \in \{1,2\}$,
 \[
  L^{\shuffle,*} = h(\itshuffle{h^{-1}(L)}{i}).
 \]
\end{propositionrep}
\begin{proof}

 By Proposition~\ref{prop:properties_shuffles},
 for any $i \in \{1,2\}$,
 $\itshuffle{h^{-1}(L)}{i} \subseteq (h^{-1}(L))^{\shuffle,*}$.
 Hence, using Lemma~\ref{lem:shuffle_hom_inclusion},
 we can conclude $h(\itshuffle{h^{-1}(L)}{i}) \subseteq h((h^{-1}(L))^{\shuffle,*})
 \subseteq (h(h^{-1}(L)))^{\shuffle,*} \subseteq L^{\shuffle,*}$.
 Similarly, by Proposition~\ref{prop:properties_shuffles},
 we find that for $i = 2$
 we have
 $$
  \itshuffle{h^{-1}(L)}{1} \subseteq  \itshuffle{h^{-1}(L)}{i} \subseteq h^{-1}(L)^{\shuffle,*}
 $$
 and so, with Lemma~\ref{lem:shuffle_hom_inclusion},
 $$ 
  h(\itshuffle{h^{-1}(L)}{1}) \subseteq  h(\itshuffle{h^{-1}(L)}{i}) \subseteq L^{\shuffle,*}.
 $$
 Hence, if we can show $L^{\shuffle,*} \subseteq h(\itshuffle{h^{-1}(L)}{1})$,
 the claim follows.
 So, the rest of the proof is dedicated to showing this inclusion.
 Let $u \in u_1 \shuffle \ldots \shuffle u_n$
 with $n \ge 1$ and $u_i \in L$ for $i \in \{1,\ldots, n\}$.
 Then,
 $$
  u = \prod_{j=1}^m (u_{1,j} u_{2,j} \cdots u_{n,j})
 $$
 for some $m \ge 0$ such that 
 $u_i = u_{i,1} \cdots u_{i,m}$ with $u_{i,j} \in \Sigma \cup \{\varepsilon\}$
 for $j \in \{1,\ldots, m\}$ and $i \in \{1,\ldots, n\}$. Set
 $$
  v_{i,j} = \left\{
  \begin{array}{ll}
   u_{i,j} & \mbox{if } u_{i,j} \in \Sigma; \\ 
   \$      & \mbox{if } u_{i,j} = \varepsilon;
  \end{array}\right.
 $$
 and $v = \prod_{i=1}^m (v_{1,i} v_{2,i} \cdots v_{n,i})$.
 Then $v \in h^{-1}(u)$ and $h(v) = u$. But also
 \[ 
 v = \shuffle_1^n(v_{1,1} \cdots v_{1,m}, \ldots, v_{n,1} \cdots v_{n,m}).
 \]
 Observe that here
 $$\shuffle_1^n(v_{1,1} \cdots v_{1,m}, \ldots, v_{n,1} \cdots v_{n,m})
 = I(v_{1,1} \cdots v_{1,m}, \ldots, v_{n,1} \cdots v_{n,m})$$
 as these words all have the same length. 
 Furthermore, $$h(v_{i,1} \cdots v_{i,m}) = u_{i,1} \cdots u_{i,m} = u_i$$
 and $v_{i,1} \cdots v_{i,m} \in h^{-1}(u_i)$.
 So, 
 $$ 
 u \in h(\shuffle_1^n(h^{-1}(u_1), \ldots, h^{-1}(u_n))) \subseteq h(\itshuffle{h^{-1}(L)}{1}).
 $$
 Hence, we have the inclusion $L^{\shuffle,*} \subseteq h(\itshuffle{h^{-1}(L)}{1})$
 and we are done
 as, by Proposition~\ref{prop:properties_shuffles},
 $\itshuffle{L}{1}\subseteq \itshuffle{L}{2}$.~\qed
\end{proof}

It is well-known that the regular languages are closed
under shuffle~\cite{BrzozowskiJLRS16}.
Also, the context-sensitive languages are closed
under shuffle~\cite{DBLP:journals/ipl/Jedrzejowicz83}.
A full trio is closed under intersection if and only if it
is closed under shuffle~\cite{Ginsburg75}.
As the recursively enumerable languages are a full trio that is closed under
intersection, this family of languages is also closed under shuffle.
As the context-free languages are a full trio that is not closed
under intersection~\cite{HopUll79}, it is also not closed
under shuffle. The last result is also implied by Proposition~\ref{prop:iterated_variant_non_cf}
and Proposition~\ref{prop:trio_shuffles_closures}.
The recursive languages are only closed under non-erasing homomorphisms,
so we could not reason similarly. Nevertheless, this family 
of languages is closed under shuffle. 

\begin{toappendix}
 A Turing machine $T$ is specified by
 a $6$-tuple $T = (Q, \Gamma, \square, \Sigma, q, F)$,
 where
 \begin{enumerate}
 \item $Q$ is the set of states,
 \item $\Gamma$ is the tape alphabet,
 \item $\square \in \Gamma$ denotes the blank symbol,
 \item $\Sigma \subseteq \Gamma \setminus \{ \square\}$ is the input alphabet,
 \item $q \in Q$ the start state, and
 \item $F \subseteq Q$ the set of final states.
 \end{enumerate}
 We will also freely switch between other models, for example using
 more than one working tape. See~\cite{HopUll79}
 for more details.
\end{toappendix}

\begin{propositionrep}
\label{prop:rec_closed_shuffle}
 The family of recursive languages is closed under shuffle.
\end{propositionrep}
\begin{proof} 

 Let $T_1 = (Q_1, \Gamma_1, \square_1, \Sigma, \delta, q_1, F_1)$
 and $T_2 = (Q_2, \Gamma_2, \square_2, \Sigma, \delta, q_2, F_2)$
 be two Turing machines deciding languages
 $L_1$ and $L_2$.
 We construct a two Turing machine $T$
 with two working tapes and on read-only input tape.
 Using additional working tapes does not increase the computational
 power~\cite{HopUll79}, and stipulating a read-only input tape
 is no restriction, as we can copy everything to a working
 tape and then operate like a one tape machine.
 Our machine $T$ non-deterministically reads the input tape,
 writing the current symbol either on the first, or the second tape.
 After it has read the input entirely, it first simulates $T_1$
 one the first tape, then it simulated $T_2$
 on the second tape.
 It accepts if and only if both accept, and rejects otherwise.
 Note that this construction would also work for recursively
 enumerable languages, then $T_1$ and $T_2$
 are only recognizers. We get a non-deterministic recognizing
 machine for the shuffle, for which a halting and accepting computation
 exists if and only the input word is the shuffle of two words
 form $L_1$ and $L_2$.
 Another argument, more specific to deciding machines,
 would be to try all words of length smaller than the input words, compute
 their shuffle and test if it contains the input word.~\qed
\end{proof}

We now state the closure properties of the families of regular,
context-sensitive, recursive and recursively enumerable languages. 

\begin{propositionrep}
\label{prop:chomsky_hierarchy_closures}
 The families of regular, context-sensitive, recursive
 and recursively enumerable languages are closed
 under $\shuffle_i^n$ for $i \in \{1,2\}$.
 Furthermore, the families of context-sensitive,
 recursive and recursively enumerable languages are closed under
 the iterated versions, i.e., if $L$ is context-sensitive, recursive
 or recursively enumerable, then $L^{\shuffle_i,\circledast}$, $i \in \{1,2\}$,
 is context-sensitive, recursive or recursively enumerable, respectively.


    

\end{propositionrep}
\begin{proofsketch}
 The closure of all mentioned language families under
 $\shuffle_i^n$ with $i \in \{1,2\}$ is implied by Lemma~\ref{lem:intreg},
 as they are all closed under intersection with regular languages
 and shuffle by Proposition~\ref{prop:rec_closed_shuffle}
 and the considerations before this statement.
 Now, we give a sketch for the iterated variant $\itshuffle{L}{1}$.
 
 Let $M = (\Sigma, \Gamma, Q, \delta, s_0, F)$ be a Turing machine for $L$.
 The following construction will work for all language classes. More specifically, 
 if given a context-sensitive, recursive or recursively enumerable
 language $L$ with an appropriate machine $M$, it could be modified to give a
 machine that describes a language in the corresponding class, but the basic idea
 is the same in all three cases.
 Recall that the context-sensitive languages could be
 characterized by linear bounded automata~\cite{HopUll79}.
 
 We construct a $3$-tape Turing machine, with one input tape, that simulates $M$ and has three working tapes.
 Intuitively,
 \begin{enumerate}
 \item the input tape stores the input $u$;
 \item on the first working tape, the input is written in a decomposed way, and on certain
  parts, the machine $M$ is simulated;
 \item on the second working tape, for each simulation run of $M$, a state of $M$ is saved;
 \item the last working tape is used to guess and store a number $0 < n \le |u|$.
 \end{enumerate}
 
 We sketch the working of the machine. First, it non-deterministically
 guesses a number $0 < n \le |u|$ and stores it on the last tape.
 Then, it parses the input $u$ in several passes, each pass
 takes $0 < k \le n$ symbols from the front of $u$
 and puts them in an ordered way on the second working tape, and, non-deterministically,
 decreases $k$ or does not decrease $k$.
 More specifically, on the second working tape, the machine writes
 a word, with a special separation sign $\#$,
 \[
  \# u_1 \# u_2 \# \cdots \# u_n \#
 \]
 where $\shuffle_1^n(u_1, \ldots, u_n)$ equals the input parsed so far.
 When the input word is completey parsed, it simulates $M$
 to check if each word $u_i$ on the second working tape
 is contained in $L$.\qed
\end{proofsketch}
\begin{proof}
 The closure of all mentioned language families under
 $\shuffle_i^n$ with $i \in \{1,2\}$ is implied by Lemma~\ref{lem:intreg},
 as they are all closed under intersection with regular languages
 and shuffle by Proposition~\ref{prop:rec_closed_shuffle}
 and the considerations before this statement.
 Now, for the iterated variants $\itshuffle{L}{i}$, $i \in \{1,\ldots, 4\}$.
 
 \medskip 
 
 \noindent\underline{The case $i = 1$}.
 
 \medskip 
 
 We give a general construction, and then argue that it works 
 also for context-sensitive and recursive languages.
 Let $M = (\Sigma, \Gamma, Q, \delta, s_0, F)$ be a Turing machine for $L$.
 We construct a $3$-tape Turing machine, with one input tape, that simulates $M$ and has three working tapes.
 Intuitively,
 \begin{enumerate}
 \item the input tape stores the input $u$;
 \item on the first working tape, the input is written in a decomposed way, and on certain
  parts, the machine $M$ is simulated;
 \item on the second working tape, for each simulation run of $M$, a state of $M$ is saved;
 \item the last working tape is used to guess and store a number $0 < n \le |u|$.
 \end{enumerate}
 
 
 %
 %

 %
 %
 %

 
 Now, to be more specific:
 Suppose $u \in \Sigma^*$ is the input word. If $|u| = 0$, then accept; so 
 assume $|u| > 0$ from now on.
 \begin{enumerate}
 \item Non-deterministically guess a number of operands $0 < n \le |u|$ 
  and write the result on the last working tape.

 \item We read $u$ from left to right, always moving the reading head one step to the right.
  First, write the first $n$ symbols of $u$ onto the first working tape, and put the separator sign $\#$
  between them and at the start and end.
  We will construct $n$ sequences between these separators, and after this first
  step we have constructed $n$ sequences with one symbol each. 
  To be more specific, if $u = x_1 \cdots x_m$ with $m = |u|$ and $x_i \in \Sigma$ for $i \in \{1,\ldots, m\}$,
  then, after this first pass, the content on the first working tape is
  $$
   \# x_1 \# x_2 \# \cdots \# x_n \#.
  $$
  Also, at each step, $0 \le k \le n$ of these sequences between the separator signs will be active, meaning we can put other symbols at their end.
  We will mark the inactive ones, either by putting a special symbol at the end, or changing the separator after these, or mark the last (top) symbol.
  Additionally, we will use a counter $1 \le j \le n$, which could also be stored on the last working tape.
  Also, in the process, we will scan the rest of the input, i.e., $x_{n+1} \cdots x_m$,
  sequentielly from left to right.

  First, set $j = 1$ and, as none is marked, all sequences are active. 
  Until we have not reached the end of the input, repeat the next steps:
  \begin{enumerate}
  \item[(i)] If the $j$-th sequence is active, either, non-deterministically,
    put the next symbol of the input at the end of it and move the reading head one step to the right,
    or, but only if $k > 1$, i.e., we have at least two active sequences left, 
     mark the $j$-th sequence as inactive.
  \item[(ii)] If the $j$-th sequence is inactive, do nothing, i.e., those sequences are skipped.
  \item[(iii)] Finally, if (i) or (ii) was executed and if $j = n$, then set $j = 1$ and start the process again; otherwise increase $j$ by one.
  \end{enumerate}
  In between these steps, maintain the separation between the sequences by copying everything one step to the right, if a symbol is put at the end of a specific sequence.
  
 \item After these steps, the input is decomposed into multiple sequences on the first working tape, i.e.
  it has the form
  $$
   \# v_1 \# v_2 \# \cdots \# v_n \#
  $$
  where the $v_i$ are scattered subsequences\footnote{A word $u = x_1 \cdots x_n$ with $x_i \in \Sigma$, $i \in \{1,\ldots, n\}$, is a scattered subsequence
  of $v$ if $v \in \Sigma^* x_1 \Sigma^* x_2 \Sigma^* \cdots \Sigma^* x_n \Sigma^*$.} 
  of the input that are produced in the previous step, and $v_1 v_2 \cdots v_n$
  is a specific permutation of the input $u$.
  Then, simulate $M$ on each sequence $v_i$ for $i \in \{1,\ldots, n\}$, 
  i.e., using this sequence as input.
  But each time a separator sign
  is read, move everything to the right if the head moves to the right, and insert a new separator sign; 
  and treat the separator sign like a blank symbol for $M$ (this step is not necessary if $M$ moves its reading 
  head only inside the input portion and at most one step to the left and right). 
  If we have simulated $M$ for each sequences, and each time $M$ accepts,
  then $u \in \shuffle^1_n(v_1, \ldots, v_n)$
  and our constructed machine accepts. 
  Also, if $u \in \shuffle^1_n(v_1, \ldots, v_n)$, then at least one accepting
  run exists.
 \end{enumerate}
   
 Now, note the following.
 
 \begin{enumerate}
 \item If the input language is recursively enumerable, we have constructed a recognizing
 non-deterministic Turing machine, hence the shuffled language
 is also recursively enumerable.
     
 \item If the input language is recursive, in the last simulation step, the machines always
  terminate. Hence, the constructed machine also terminates, and it is actually
  a deciding machine for the shuffled language.
  
 \item Suppose the input machines are linear bounded automata as defined in~\cite{HopUll79}, i.e.,
  using only space $|v|$ for each input $v$. This implies that 
 in the simulation step they do not read symbols outside of $v_1, \ldots, v_n$.
  From the multiple tapes, we can construct a single tape machine (also incorporating the reading tape)
  with only a linear blow up in space~\cite{HopUll79}. The counters, and the states, we have
  to store on the working tapes, are bounded by the input. So, the simulation, 
  as additionally we only need a copy of the states and transition rules of $M$ in the constructed machine,
  could be performed in linear space.
  Hence, in total, we only need
  linear space on each working tape, which gives us in total that we only need linear space.
  Hence, by standard results~\cite{HopUll79}, the language we get is accepted
  by a linear bounded automaton and so context-sensitive.
 \end{enumerate}

  \medskip 
 
 \noindent\underline{The case $i = 2$}.
 
 \medskip 
 
 We use Lemma~\ref{lem:shuffle_2_form} and modify the machine for 
 the case $i = 1$.
 Let $N = |u| + 1$.
 First, perform the same steps as this machine, until 
 \[ 
   \# x_1 \# x_2 \# \cdots \# x_n \#.
 \] 
 is written on the tape. Then, non-deterministically,
 guess numbers $0 \le k_1, \ldots, k_n \le N$
 and write
 \[  
     \# \$^{k_1} x_1 \# \$^{k_2} x_2 \# \cdots \# \$^{k_n} x_n \#.
 \]
  on the working tape. Then, modify $M$
  such that the dollar sign $\$$ is skipped when read as an input, without
  a state change.
  Then, run the same steps as the algorithm for $i = 1$.
  By Lemma~\ref{lem:shuffle_2_form}, the input is accepted if and only 
  if it is in $u_2^n(L, \ldots, L)$.
  
  Similar to the case $i = 1$, we see that all closure properties are
 satisfied.
 
 \medskip

These constructions show closure under all iterated shuffle variants.  
We have used closure properties to show closure under $\shuffle_i^n$
for $i \in \{1,2\}$ above. But note, we could also use the above constructions,
but with $n$ fixed instead of a guessed value.~\qed
\end{proof}

Lastly, we can characterize the family of non-empty finite languages
using $\shuffle_1^n$.

\begin{propositionrep}
 \label{prop:fin_lang}
 The family of non-empty finite languages is the smallest family
 of languages $\mathcal L$
 such that
 \begin{enumerate}
 \item $\{ w \} \in \mathcal L$ for some word $w \ne \varepsilon$ with all symbols in $w$ distinct, i.e., $w = a_1 \cdots a_m$ with $a_i \ne a_j$ for $1 \le i \ne j \le m$ and $a_i \in \Sigma$ for $i \in \{1,\ldots, m\}$,
 
 \item  closed under union,
 
 \item closed under homomorphisms $h : \Sigma^* \to \Gamma^*$
 such that $|h(x)| \le 1$ for $x \in \Sigma$,
 
 \item closed under $\shuffle_1^n$ for some $n \ge 2$.
 \end{enumerate}
\end{propositionrep}
\begin{proof}
 Let $\mathcal L$ be a family of language admitting the stated closure properties.
 First, let us note that we can assume $\mathcal L$
 is closed under $\shuffle_1^2$, 
 as $\shuffle_1^2(u,v) = \shuffle_1^2(u,v,\varepsilon,\ldots, \varepsilon)$
 and $\{\varepsilon\} \in \mathcal L$ by applying the homomorphism 
 $h : \Sigma^* \to \Gamma^*$ given by $h(x) = \varepsilon$ for $x \in \Sigma$ to $w$ with $\{w\} \in \mathcal L$.
 Clearly, $\mathcal L$ contains only finite languages.
 The closure properties of $\mathcal L$ imply that it suffices
 to prove that the singleton languages consisting of an arbitrary word $w = a_1 \cdots a_n$,
 where all symbols $a_1, \ldots, a_n$ are distinct, 
 are in $\mathcal L$.
 We prove this by induction on $n$. By choosing an appropriate homomorphism, every
 word of length at most $|w|$ could written as a homomorphic image of $w$,
 so the assertion holds for $n \in \{0,1,\ldots, |w|\}$.
 For $n > |w|$, we have
 $\{ a_1 a_2 \cdots a_n \} = \shuffle_1^2(a_1, a_2 \cdots a_n)$.\qed 
\end{proof}

And, without closure under any homomorphic mappings.

\begin{propositionrep}
 The family of non-empty finite languages is the smallest family
 of languages $\mathcal L$ such that
 (1) $\{ \{\varepsilon \} \} \cup \{ \{a\} \mid a \in \Sigma \} \subseteq \mathcal L$ and
 which is (2) closed under union and $\shuffle_1^n$ for some $n \ge 2$.
\end{propositionrep}
\begin{proof} 
 For any word $w = a_1 a_2 \cdots a_n$ with $a_i \in \Sigma$ for $i \in \{1,\ldots, n\}$
 and such that the symbols do not have to be distinct, 
 we have $\{ w \} = \shuffle_1^2(a_1, a_2 \cdots a_n)$.
 Then, the proof could by carried out with similar arguments as in the proof 
 of Proposition~\ref{prop:fin_lang}. 
\end{proof}

\section{Computational Complexity}


    
    


Here, we consider various decision problems for both shuffle variants
motivated by similar problems for the ordinary shuffle operation~\cite{DBLP:journals/jcss/WarmuthH84,stockmeyer1973word,DBLP:journals/tcs/BerglundBB13,Ogden78}.
It could be noted that all problems considered are tractable
when considered for $\shuffle_1$. However, for $\shuffle_2$, most problems
considered are tractable except one that is \NP-complete. Hence, the ability to vary the starting
positions of different words when interlacing them consecutively, in an alternating fashion,
seems to introduce computional hardness.
For $\shuffle_1$, we find the following:

\begin{propositionrep}
 Given $L \subseteq \Sigma^*$ represented by a non-deterministic\footnote{In a non-deterministic automaton
 the transitions are represented by a relation instead of a function, see~\cite{HopUll79}.
 } automaton and words $w_1, \ldots, w_n \in \Sigma^*$,
 it is decidable in polynomial time
 if $\shuffle_1^n(w_1, \ldots, w_n) \in L$.
\end{propositionrep}
\begin{proof}
 For the words $w_1, \ldots, w_n$, put all the first symbols in front, until one word has no more left,
 in which case proceed in the same manner with the rest.
 At the end, we check if the word is in $L$, which could be done in polynomial time
 if $L$ is given by an automaton (even for a non-deterministic automaton)
 and answer yes if some word in $L$ agrees with the result, and no otherwise.~\qed
\end{proof}

\todo{auch für endliches $L$ formulieren.}
\begin{propositionrep}
 Given words $w \in \Sigma^*$ and $v \in \Sigma^*$,
 it is decidable in polynomial time
 if $w \in \itshuffle{\{v\}}{1}$.
\end{propositionrep}
\begin{proof}
 By Proposition~\ref{prop:properties_shuffles},
 if $w \in \shuffle_1^n(v,\ldots, v)$ then $|w| = n|v|$.
 So, if this is not the case, we can immediately reject.
 Otherwise, we can divide $|w|$ by $|v|$
 to compute the necessary $n$ with $|w| = n|v|$.
 Then, if we write $v = x_1 \cdots x_m$
 with $m = |v|$ and $x_i \in \Sigma$ for $i \in \{1,\ldots,m\}$,
 then $\shuffle_1^n(v,\ldots, v) = x_1^n x_2^n \cdots x_m^n$.
 Hence, we only have to check
 if $w$ equals this word, which could be performed in polynomial time.~\qed
\end{proof}

The non-uniform membership for a languages $L \subseteq \Sigma^*$ 
is the computational problem to decide for a given word $w \in \Sigma^*$
if $w \in L$.
In~\cite{Ogden78} it was shown that the shuffle of two deterministic context-free languages
can yield a language that has an \NP-complete non-uniform membership problem. 
This result was improved in~\cite{DBLP:journals/tcs/BerglundBB13} by showing that there even
exist linear deterministic context-free languages whose shuffle gives an intractable
non-uniform membership problem. 

Next, we show that for the initial literal and the literal shuffle, 
this could not happen if the original languages have a tractable membership problem,
which is the case for context-free languages~\cite{HopUll79}.

\begin{proposition}
 Let $U, V \subseteq \Sigma^*$ be languages whose membership problem
 is solvable in polynomial time. Then, also the membership problems
 for $U \shuffle_1 V$ and $U \shuffle_2 V$
 are solvable in polynomial time.
\end{proposition}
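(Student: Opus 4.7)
The plan is to give two polynomial-time algorithms, one for each operation, each of which reduces the membership question to a polynomial number of oracle calls to the assumed polynomial-time membership tests for $U$ and $V$.

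For the case of $U \shuffle_1 V$, given an input $w \in \Sigma^*$ of length $n$, I would iterate over all candidate ``interleaved prefix lengths'' $k \in \{0, 1, \ldots, \lfloor n/2 \rfloor\}$. For each $k$, split $w = p \cdot r$ with $|p| = 2k$, de-interleave $p$ into $u, v \in \Sigma^k$ (so $p = I(u,v)$), and invoke the membership tests to check whether $(ur \in U$ and $v \in V)$ or $(u \in U$ and $vr \in V)$. Accept iff some $k$ certifies membership. Correctness is immediate from the definition of $\shuffle_1$; the overall running time is $O(n)$ iterations times the cost of de-interleaving ($O(n)$) plus four polynomial-time membership calls, hence polynomial.

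For $U \shuffle_2 V$, the decomposition $w = w_1 \cdot I(u,v) \cdot w_2$ in the definition leaves two degrees of freedom, so I would iterate over all pairs $(i,j)$ with $0 \le i \le j \le n$ and $j-i$ even. For each such pair, set $w_1 = w[1..i]$, $m = w[i+1..j]$, $w_2 = w[j+1..n]$, de-interleave $m$ into $u, v \in \Sigma^{(j-i)/2}$, and run the four membership tests corresponding to the four disjuncts in the definition of $\shuffle_2$: namely $(w_1 u w_2 \in U, v \in V)$, $(u \in U, w_1 v w_2 \in V)$, $(w_1 u \in U, v w_2 \in V)$, and $(u w_2 \in U, w_1 v \in V)$. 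Accept iff some pair and some disjunct succeeds. Correctness is again direct from the definition; the algorithm performs $O(n^2)$ iterations, each requiring $O(n)$ preprocessing and a constant number of polynomial-time membership queries, so the total running time is polynomial.

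There is no real obstacle here: both operations are built from short combinatorial ``templates'' (a single split into three parts, plus a de-interleaving of the middle) with only polynomially many parameterizations of the template for a fixed input word, and each instantiation is verified by a constant number of black-box calls to the polynomial-time deciders for $U$ and $V$. The mild subtlety worth double-checking is that the four disjuncts of the definition of $\shuffle_2$ already account for both possible ``starting sides'' of the interleaving (as witnessed for instance by the word $daebc$ in the example following the definition, which is handled by the fourth disjunct with $w_1=d$, $u=a$, $v=e$, $w_2=bc$), so enumerating $(i,j)$ together with these four cases is indeed exhaustive.
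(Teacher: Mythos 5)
Your proposal is correct and follows essentially the same route as the paper's proof: enumerate the polynomially many split points, de-interleave the middle block (with the first word's letters at the odd positions), and decide each candidate via a constant number of calls to the polynomial-time deciders for $U$ and $V$, checking the two (resp.\ four) disjuncts of the definition. The subtlety you flag about the four disjuncts of $\shuffle_2$ covering both possible starting sides is the same point the paper handles by invoking the commutativity $U \shuffle_2 V = V \shuffle_2 U$.
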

\begin{proof}
 Let $w$ be a given word and write $w = w_1 \cdots w_n$
 with $w_i \in \Sigma$ for $i \in \{1,\ldots,n\}$.
 
 Then, to check if $w \in U\shuffle_2 V$,
 we try all decompositions $w = xyz$
 with $x,y,z \in \Sigma^*$ and $|y|$ even.
 For $w = xyz$, write $y = y_1 \cdots y_{2n}$
 with $y_i \in \Sigma$ and $n \ge 0$.
 Then test if $xy_1y_{3} \cdots y_{2n-1} \in U$
 and $y_2 \cdots y_{2n}z \in V$, or $y_1y_{3} \cdots y_{2n-1}z \in U$
 and $xy_2 \cdots y_{2n} \in V$,
 or $xy_1 y_3 \cdots y_{2n-1} z \in U$
 and $y_2 \cdots y_{2n} \in V$,
 or $y_1 y_3 \cdots y_{2n-1} \in U$ and $xy_2 \cdots y_{2n}z \in V$
 As $U \shuffle_2 V = V \shuffle_2 U$
 this is sufficient to find out if $w \in U\shuffle_2 V$.

 For $U\shuffle_1 V$, 
 first check if $w \in U$ and $\varepsilon \in V$,
 or if $\varepsilon \in U$ and $w \in V$.
 If neither of the previous checks give a YES-answer,
 then try all decompositions $w = yz$
 with $y = y_1 \cdots y_{2n}$ for $y_i \in \Sigma$ and $n > 0$.
 Then, test if $y_1 y_3 \cdots y_{2n-1}z \in U$
 and $y_2 \cdots y_{2n} \in V$,
 or if $y_1 y_3 \cdots y_{2n-1}z \in V$
 and $y_2 \cdots y_{2n} \in U$. If at least one of these tests gives
 a YES-answer, we have $w \in U \shuffle_1 V$, otherwise $w \notin U \shuffle_1 V$.
 
 In all cases, only polynomially many tests were necessary.\qed
\end{proof}

A similar procedure could be given for any fixed number $n$\todo{hinschreiben?}
and $L_1, \ldots, L_n \subseteq \Sigma^*$
to decide the membership problem for $\shuffle_i^n(L_1, \ldots, L_n)$, $i \in \{1,2\}$
in polynomial time.

Lastly, the following is an intractable problem for the second shuffle variant.

\begin{propositionrep}
\label{prop:np_problem}
  Suppose $|\Sigma| \ge 3$. Given a finite language $L \subseteq \Sigma^*$ represented by a deterministic automaton
  and words $w_1, \ldots, w_n \in \Sigma^*$,
 it is $\NP$-complete to decide if $\shuffle_2^n(w_1, \ldots, w_n) \cap L \ne \emptyset$.
\end{propositionrep}
\begin{proofsketch}
 We give the basic idea for the hardness proof. Similarly as in~\cite{DBLP:journals/jcss/WarmuthH84}
 for the corresponding problem in case of the ordinary shuffle and a single word $L = \{w\}$ as input,
 we can use a reduction from {\sc 3-Partition}. This problem is known
to be strongly \NP-complete, i.e., it is \NP-complete
even when the input numbers are encoded in unary~\cite{garey1979computers}.
 
 \begin{quote}
    {\sc 3-Partition} \\
    \emph{Input:} A sequence of natural numbers $S = \{ n_1, \ldots, n_{3m} \}$
     such that $B = (\sum_{i=1}^{3m} n_i) / m \in \mathbb N_0$
     and for each $i$, $1 \le i \le 3m$, $B/4 < n_i < B/2$.
 
     \emph{Question:} Can $S$ be partitioned into $m$ disjoint subsequences $S_1, \ldots, S_m$
     such that for each $k$, $1 \le k \le m$, $S_k$ has exactly three elements and $\sum_{n \in S_k} n = B$.
\end{quote}
 Let $S = \{n_1, \ldots, n_{3m} \}$ be an instance of {\sc 3-Partition}.
Set 
\[ 
L = \{ aaauc \in \{a,b,c\}^* \mid |u|_b = B, |u|_a = 0, |u|_c = 2 \}^m.  
\]
We can construct a deterministic automaton for $L$ in polynomial time.
Then, the given instance of {\sc 3-Partition} has a solution
if and only if \[ 
L \cap \shuffle_2^n(ab^{n_1}c, ab^{n_2}c, \ldots, ab^{n_{3m}}c) \ne \emptyset.\qed
\] 
\end{proofsketch}
\begin{proof} 

 First, we show containment in \NP, then
 we will show \NP-hardness.

\medskip 


%
%
%

\noindent\underline{Containment in \NP:} Guess a word $w$ of length at most $\max\{ |u| \mid u \in L \}$
and check if $w \in L$. Then, guess a number $N \ge |w_1| + \ldots + |w_n|$
and, for each word $w_i$, $i \in \{1,\ldots,n\}$, 
guess non-deterministically a number $0 \le j_i \le N - |w_i|$.
Set $u_i = \$^{j_i} w_i \$^{N - |w_i| - j_i}$.
Then, with the notation from Definition~\ref{lem:hom_char},
compute $h(I(u_1, \ldots, u_n))$, which is possible in polynomial time,
and compare the result with $w$. 
By Definition~\ref{lem:hom_char}, there exist
numbers such that $h(I(u_1, \ldots, u_n))$
equals $w$ if and only if $w \in \shuffle_2^n(w_1, \ldots, w_n)$.

\medskip

\noindent\underline{\NP-hardness:} Our reduction is similar to the one from~\cite{DBLP:journals/jcss/WarmuthH84}
used for the ordinary shuffle operation and a single word $L = \{w\}$.
We give a reduction from {\sc 3-Partition}, a problem known
to be strongly \NP-complete, i.e., it is \NP-complete
even when the input numbers are encoded in unary~\cite{garey1979computers}.

\begin{quote}
    {\sc 3-Partition} \\
    \emph{Input:} A sequence of natural numbers $S = \{ n_1, \ldots, n_{3m} \}$
     such that $B = (\sum_{i=1}^{3m} n_i) / m \in \mathbb N_0$
     and for each $i$, $1 \le i \le 3m$, $B/4 < n_i < B/2$.
 
     \emph{Question:} Can $S$ be partitioned into $m$ disjoint subsequences $S_1, \ldots, S_m$
     such that for each $k$, $1 \le k \le m$, $S_k$ has exactly three elements and $\sum_{n \in S_k} n = B$.
\end{quote}

 Let $S = \{n_1, \ldots, n_{3m} \}$ be an instance of {\sc 3-Partition}.
 Set
\[ 
L = \{ aaauc \in \{a,b,c\}^* \mid |u|_b = B, |u|_a = 0, |u|_c = 2 \}^m. 
\]
We can construct a deterministic automaton for $L$ in polynomial time.
Then, the given instance of {\sc 3-Partition} has a solution
if and only if \[ 
L \cap \shuffle_2^n(ab^{n_1}c, ab^{n_2}c, \ldots, ab^{n_{3m}}c) \ne \emptyset.
\]

Intuitively, the reader can image the words sitting on different aligned ``sliders''. 
Then, we move the words on these ``sliders'' such that words corresponding to a set $S_i$
of the partition are on top of each other.

If we have a partition $S_1, \ldots, S_m$, then for
each $S_k = \{ n_{i_1}, n_{i_2}, n_{i_3} \}$ with $i_1 < i_2 < i_3$, $1 \le k \le m$, let $w_k = \shuffle_1^3(ab^{n_{i_1}}c, ab^{n_{i_2}}c, ab^{n_{i_3}}c)$.
Note that the relative order of the arguments here and in $\shuffle_2^n(ab^{n_1}c, ab^{n_2}c, \ldots, ab^{n_{3m}}c)$ above\todo{genauer erläutern.}
coincide.
Then, $w_1 \cdots w_m \in L$\todo{diese gleichung zwischen beide shuffle varianten im haupttext erwähnen?}
and, as for any $r,s \ge 0$ and $u_1, \ldots, u_r, v_1, \ldots, v_s \in \Sigma^*$ we have $\shuffle_1^r(u_1, \ldots, u_r) \cdot \shuffle_1^s(v_1, \ldots, v_s) 
\in \shuffle_2^{r+s}(x_1,\ldots, x_{r+s})$
for words $\{x_1, \ldots, x_{r+s}\}$ such that $\{x_1, \ldots, x_{r+s}\}$ equals $\{ u_1, \ldots, u_r, v_1, \ldots, v_s \}$
and the relative orders are preserved, \todo{todo, wie schuffeln.}
we also have 
\[ 
 w_1 \cdots w_m \in \shuffle_2^n(ab^{n_1}c, ab^{n_2}c, \ldots, ab^{n_{3m}}c).
\]
Hence, $w_1 \cdots w_m \in L \cap \shuffle_2^n(ab^{n_1}c, ab^{n_2}c, \ldots, ab^{n_{3m}}c)$.
\todo{noch problem mit der anordnung der argumente!}

Conversely, suppose $w \in L \cap \shuffle_2^n(ab^{n_1}c, ab^{n_2}c, \ldots, ab^{n_{3m}}c)$.
We do induction on $m$. If $m = 1$ the statement is trivial.
Otherwise, write $w = w_1 \cdots w_m$ with $w_i \in L$ having precisely three $a$'s, three $c$'s and
$B$ many times the letter $b$. First, we show an
auxiliary result.
 For a tuple $(x_1, \ldots, x_n)$ and $I \subseteq \{1,\ldots,n\}$
 denote by $(x_1, \ldots, x_n) \uparrow I$ the tuple resulting
 when the entries in $I$ are left out. For example, $(x_1, \ldots, x_n) \uparrow \{ 3,\ldots, n\} = (x_1, x_2)$.
 
\medskip

\noindent\underline{Claim:} Let $u \in L$ and $k \ge 3$.
 Then, $uv \in \shuffle_2^n(ab^{n_1}c, ab^{n_2}c, \ldots, ab^{n_{k}}c)$ 
 if and only if there exist $i_1 < i_2 < i_3$
 such that $v \in \shuffle_2^{n-3}(ab^{m_1}c, \ldots, ab^{m_{k-3}}c)$
 and 
 \[
  (ab^{n_1}c, ab^{n_2}c, \ldots, ab^{n_{k}}c) \uparrow \{ i_1, i_2, i_3 \}
   = (ab^{m_1}c, \ldots, ab^{m_{k-3}}c)
 \]
 and $n_{i_1} + n_{i_2} + n_{i_3} = B$.
 \begin{quote}
     \emph{Proof of the Claim.}
     As $u \in a\{a,b,c\}^*c$, and $u$ contains precisely three $a$'s and $c$'s,
     it must consist of precisely three words from the argument
     and cannot interlace with any other.
     So,
     \[ 
     u \in \shuffle_2^3(ab^{n_{i_1}}c, ab^{n_{i_2}}c, ab^{n_{i_3}}c)
     \mbox{ and } 
     v \in \shuffle_2^{n-3}(ab^{m_1}c, \ldots, ab^{m_{k-3}}c)
     \]
     as stated in the Claim.
     
     Conversely, by the definition
     of $L$, we have $u \in \shuffle_2^3(ab^{r_1}c, ab^{r_2}c, ab^{r_3}c)$
     with $r_1 + r_2 + r_3 = B$.
     Then,
     \begin{multline*}
        \shuffle_2^3(ab^{r_1}c, ab^{r_2}c, ab^{r_3}c) \cdot 
      \shuffle_2^{n-3}(ab^{m_1}c, \ldots, ab^{m_{k-3}}c) \\
      \subseteq 
       \shuffle_2^n(ab^{n_1}c, ab^{n_2}c, \ldots, ab^{n_{k}}c)
     \end{multline*}
     where $r_1, r_2, r_3$ are interlaced into the $m_1,\ldots, m_{k-3}$
     to give the numbering $n_1, \ldots, n_k$.
     This gives the claim. \emph{[End, Proof of the Claim.]}
 \end{quote}
 
 Then, applying the above claim to
 $w = w_1 \cdots w_m$ with $w_i \in L$
 gives \[
  w_2 \cdots w_m \in \shuffle_2^{n-3}(ab^{s_1}c, ab^{s_2}c, \ldots, ab^{s_{3m-3}}c).
  \]
  By induction hypothesis, the remaining numbers $\{s_1, \ldots, s_3\}$
  have a $3$-partition, so the original sequence admits such a partitioning.\qed
\end{proof}


Lastly, as the constructions in the proof of Lemma~\ref{lem:intreg}
are all effective, and the inclusion problem for regular languages
is decidable~\cite{HopUll79}, we can decide if a given regular language is preserved
under any of the shuffle variants.

As the inclusion problem is undecidable even for context-free languages~\cite{HopUll79},
we cannot derive an analogous result for the other families of languages in the same way.

\begin{propositionrep} 
 For every regular language $L \subseteq \Sigma^*$ and $i \in \{1,2\}$, we
 can decide whether $L$ is closed under $\shuffle_i^n$, i.e,
 if $\shuffle_i^n(L,\ldots, L) \subseteq L$ holds.
\end{propositionrep}
\begin{proof}
 For two regular language, we can effectively construct an automaton 
 accepting their shuffle. Hence, we can effectively
 construct an automaton for $L \shuffle \ldots \shuffle L$ ($n$ times).
 As an automaton for the intersection with another regular language
 is effectively constructible, by the method of proof of Lemma~\ref{lem:intreg},
 we find that an automaton for $\shuffle_i^n(L, \ldots, L)$
 is effectively constructible. Because the inclusion problem
 for regular languages is decidable, we 
 can decide if $\shuffle_i^n(L, \ldots, L) \subseteq L$
 holds true.~\qed
\end{proof}

\section{Permuting Arguments}
\label{sec:comm_variants}

If we permute the arguments of the first shuffle variant $\shuffle_1^n$ we may get different results.
Also, for the second variant, see Proposition~\ref{prop:properties_shuffles:cyclic_permutation},
only permuting the arguments cyclically does not change the result, but permuting the arguments arbitrarily might change
the result, see Example~\ref{ex:shuffle_2_abbc}.

Here, we introduce two variants of $\shuffle_1$ that are indifferent
to the order of the arguments, i.e., permuting the arguments does not change
the result, by considering all possibilities
in which the strings could be interlaced. A similar definition is possible for the second variant.

\begin{definition}[$n$-ary symmetric initial literal shuffle] 
\label{def:shuffle_3}
 Let $u_1, \ldots, u_n \in \Sigma^*$ and $x_1, \ldots, x_n \in \Sigma$.
 Then the function $\shuffle_3^n : (\Sigma^*)^n \to \mathcal P(\Sigma^*)$
 is given by
 \[
 \shuffle_3^n(u_1, \ldots, u_n) = \bigcup_{\pi \in \mathcal S_n} \{ \shuffle_1^n(u_{\pi(1)},\ldots,u_{\pi(n)}) \}.
 \]
\end{definition}

An even stronger form as the previous definition do we get, if we do not care
in what order we put the letters at each step.

\todo{nur mit der operation kann man eigentlich sachen ausdrücken die sonst kompliziert sind,
da die anderen auch über shuffle über getrennten alphabeten und schnitt mit regulärer sprache.}
\begin{definition}[$n$-ary non-ordered initial literal shuffle]
\label{def:shuffle_4}
 Let $u_1, \ldots, u_n \in \Sigma^*$ and $x_1, \ldots, x_n \in \Sigma$.
 Then define
 \begin{align*}
  \shuffle_4^n(x_1u_1, \ldots, x_n u_n) & = \bigcup_{\pi \in \mathcal S_n} x_{\pi(1)} \cdots x_{\pi(n)} \shuffle_4^n(u_1, \ldots, u_n)  \\ 
  \shuffle_4^n(u_1, \ldots, u_{j-1}, \varepsilon, u_{j+1},\ldots, u_n) & = \shuffle_4^{n-1}(u_1, \ldots, u_{j-1}, u_{j+1},\ldots, u_n) \\
  \shuffle_4^{1}(u_1) & = \{ u_1 \}.    
 \end{align*} 
\end{definition}

Similarly as in Definition~\ref{def:iterated_versions}
we can define iterated versions $\itshuffle{L}{3}$
and $\itshuffle{L}{4}$ for $L \subseteq \Sigma^*$.
The following properties follow readily.

\begin{propositionrep}
\label{prop:properties_shuffles_comm}
 Let $L_1, \ldots, L_n \subseteq \Sigma^*$, $\pi \in \mathcal S_n$ and $i \in \{3,4\}$. Then 
 \begin{enumerate}
 \item $
  \shuffle_3^n(L_1, \ldots, L_n) = \shuffle_3^n(L_{\pi(1)}, \ldots, L_{\pi(n)});
 $
 \item $
  \shuffle_4^n(L_1, \ldots, L_n) = \shuffle_4^n(L_{\pi(1)}, \ldots, L_{\pi(n)});
 $
 \item \label{prop:properties_shuffles:shuffle_1_in_shuffle_3_in_shuffle_4}
  $\{ \shuffle_1^n(L_1, \ldots, L_n) \} \subseteq \shuffle_3^n(L_1, \ldots, L_n) \} \subseteq \shuffle_4^n(L_1, \ldots, L_n)$;
 \item 
  \label{prop:properties_shuffles:shuffle_i_in_shuffle}
  $\shuffle_i^n(L_1, \ldots, L_n) \subseteq L_1 \shuffle \cdots \shuffle L_n$;
 \item $\Sigma^* = \itshuffle{\Sigma}{i}$ for $i \in \{3,4\}$;
 \item $\itshuffle{L_1}{1}\subseteq \itshuffle{L_1}{3}\subseteq\itshuffle{L_1}{4}\subseteq L_1^{\shuffle,*}$;
 \item for $u_1, \ldots, u_n, u \in \Sigma^*$, 
  if $u \in \shuffle_i^n(\{u_1\}, \ldots, \{u_n\})$, then $|u| = |u_1| + \cdots + |u_n|$.
 \end{enumerate}
\end{propositionrep}
\begin{proof}
\begin{enumerate}
\item Let $L_1, \ldots, L_n \subseteq \Sigma^*$
and $\pi \in\mathcal S_n$ a permutation.
Then
\begin{align*}
 \shuffle_3^n(L_1, \ldots, L_n) & = \bigcup_{u_1 \in L_1, \ldots, u_n \in L_n} \shuffle_3^n(u_1, \ldots u_n) \\ 
        & = \bigcup_{u_1 \in L_1, \ldots, u_n \in L_n} \{ \shuffle_1^n(u_{\tau(1)}, \ldots, u_{\tau(n)}) \mid \tau \in \mathcal S_n \} \\
        & = \bigcup_{u_1 \in L_1, \ldots, u_n \in L_n} \{ \shuffle_1^n(u_{\pi(\tau(1))}, \ldots, u_{\pi(\tau(n))}) \mid \tau \in \mathcal S_n \} \\
        & = \bigcup_{u_1 \in L_1, \ldots, u_n \in L_n} \shuffle_3^n(u_{\pi(1)}, \ldots, u_{\pi(n)}) \\
        & = \shuffle_3^n(L_{\pi(1)}, \ldots, L_{\pi(n)}.
\end{align*}

\item Let $u_1, \ldots, u_n \in \Sigma^*$ and $\pi \in \mathcal S_n$.
 We have $u \in \shuffle_4^n(u_1, \ldots, u_n)$
 if and only if $u$ could be obtained from a  matrix with entries
 in $\Sigma \cup \{\varepsilon\}$ (similarly as in the proof of part (2) in Proposition~\ref{prop:properties_shuffles:cyclic_permutation}, which is constructed by writing the words
 $u_1, \ldots, u_n$ in that order in the rows, padding with $\varepsilon$
 at the end such that every row has the same length, then permuting
 the entries in each column and finally reading these columns and concatenating
 the results in that order. As we permute the entries in each column
 before, it actually does not matter in what order we put the words $u_1, \ldots, u_n$
 in the rows in the first step, i.e., we could start with any permutation
 and get the same set of words at the end. Hence
 $\shuffle_4^n(u_1, \ldots, u_n) = \shuffle_4^n(u_{\pi(1)}, \ldots, u_{\pi(n)})$
 and this is true by extension to languages.
 
\item  Let $u_1, \ldots, u_n \in \Sigma^*$. 
 Then $\shuffle_1^n(u_1, \ldots, u_n) \in \shuffle_3^n(u_1, \ldots, u_n)$
 is obvious. Also, it could be easily shown, using induction
 over $|u_1| + \ldots + |u_n|$ and Definition~\ref{def:shuffle_4}, that
 $\shuffle_1^n(u_1, \ldots, u_n) \in \shuffle_4^n(u_1, \ldots, u_n)$.
 By the previous arguments,  $\shuffle_4$ is invariant under permutation of its arguments.
 Hence
 $$
  \shuffle_3(u_1, \ldots, u_n) = \{ \shuffle_1(u_{\tau(1)},\ldots, u_{\tau(n)}) \mid \tau \in \mathcal S_n \} 
   \subseteq \shuffle_4^n(u_1, \ldots, u_n).
 $$
 This also gives the result for the extension to languages.

\item Suppose $u_1, \ldots, u_n \in \Sigma^*$. 
 As the shuffle operation is commutative, we find 
 $$
  \shuffle_3(u_1, \ldots, u_n) = \{ \shuffle_1(u_{\tau(1)},\ldots, u_{\tau(n)}) \mid \tau \in \mathcal S_n \} 
   \subseteq u_1 \shuffle \ldots \shuffle u_n.
 $$
 This yields, together with Lemma~\ref{lem:shuffle_hom_inclusion},
 \begin{align*} 
  \{ h(\shuffle_1^n(\$^{k_1} u_1, \ldots, \$^{k_n} u_n)) \} 
  & \subseteq h(\$^{k_1} u_1 \shuffle \ldots \shuffle \$^{k_n} u_n) \\
  & \subseteq h(\$^{k_1} u_1) \shuffle \ldots \shuffle h(\$^{k_n} u_n) \\ 
  & = u_1 \shuffle \ldots \shuffle u_n.
 \end{align*}
 So, $\shuffle_2^n(u_1, \ldots, u_n) \subseteq u_1 \shuffle \ldots \shuffle u_n$.
 Hence, these equations are also true by their extensions
 to languages. 
 
\item With Proposition~\ref{prop:properties_shuffles},
 we have $\Sigma^* = \itshuffle{\Sigma}{1}$.
Then, with Property~\ref{prop:properties_shuffles:shuffle_1_in_shuffle_3_in_shuffle_4}, $\itshuffle{\Sigma}{1} \subseteq \itshuffle{\Sigma}{3} \subseteq \itshuffle{\Sigma}{4}$,
which gives the claim.

\item The first inclusion is implied by Property~\ref{prop:properties_shuffles:shuffle_1_in_shuffle_2}
 of this Proposition~\ref{prop:properties_shuffles}.
 The other inclusions are implied by Property~\ref{prop:properties_shuffles:shuffle_1_in_shuffle_3_in_shuffle_4} and Property~\ref{prop:properties_shuffles:shuffle_i_in_shuffle}.
 
\item By Property~\ref{prop:properties_shuffles:shuffle_i_in_shuffle}, the shuffle variants
 are included in the general shuffle. Then, argue as 
 in the proof of the corresponding statement in Proposition~\ref{prop:properties_shuffles}.
\end{enumerate}
\end{proof}

With these properties, we find that for the iteration the first and third
shuffle variants give the same language operator.

\begin{lemmarep} \label{lem:shuffle1_and_3_equal_for_languages}
 For languages $L \subseteq \Sigma^*$
 we have $\shuffle_1^n(L, \ldots, L) = \shuffle_3^n(L, \ldots, L)$.
\end{lemmarep}
\begin{proof}
 By Proposition~\ref{prop:properties_shuffles}, $\shuffle_1^n(L, \ldots, L) \subseteq \shuffle_3^n(L, \ldots, L)$.
 Conversely, suppose $u \in \shuffle_3^n(u_1, \ldots, u_n)$
 with $u_1, \ldots, u_n \in L$.
 Then $u = \shuffle_1^n(u_{\pi(1)}, \ldots, u_{\pi(n)})$
 for some $\pi \in \mathcal S_n$.
 But $\{ u_1, \ldots, u_n \}\subseteq L$, so we get $u \in \shuffle_1^n(L, \ldots, L)$. 
\end{proof}

For the iterated version, this gives that the first and third variant are equal.

\begin{corollary}\label{cor:itshuffle_1_3_equal}
  Let $L \subseteq \Sigma^*$
be a language. Then 
 $\itshuffle{L}{1} = \itshuffle{L}{3}$.
\end{corollary}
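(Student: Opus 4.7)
The plan is to obtain the corollary as an immediate consequence of Lemma~\ref{lem:shuffle1_and_3_equal_for_languages}. By the extension of Definition~\ref{def:iterated_versions} to $i \in \{3,4\}$ (mentioned right after Definition~\ref{def:shuffle_4}), we have
\[
\itshuffle{L}{1} = \{\varepsilon\} \cup \bigcup_{n \ge 1} \shuffle_1^n(L,\ldots,L) \qquad\text{and}\qquad \itshuffle{L}{3} = \{\varepsilon\} \cup \bigcup_{n \ge 1} \shuffle_3^n(L,\ldots,L).
\]

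First I would invoke Lemma~\ref{lem:shuffle1_and_3_equal_for_languages}, which asserts $\shuffle_1^n(L,\ldots,L) = \shuffle_3^n(L,\ldots,L)$ for every fixed $n \ge 1$, since all $n$ arguments are the same language $L$, so any permutation of the arguments leaves the input tuple unchanged. Then I would take the union of both sides over $n \ge 1$ and adjoin $\{\varepsilon\}$, which yields $\itshuffle{L}{1} = \itshuffle{L}{3}$.

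There is essentially no obstacle here: the content is entirely in Lemma~\ref{lem:shuffle1_and_3_equal_for_languages}, and the corollary is the formal bookkeeping that passes from the $n$-ary operations to their iterated versions. The only point worth stating explicitly is that the definition of $\itshuffle{L}{3}$ parallels Definition~\ref{def:iterated_versions}, so that the union-of-powers presentation is legitimate and the pointwise equality in $n$ transfers directly.
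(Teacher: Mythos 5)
Your proof is correct and matches the paper's intent exactly: the paper states this corollary without a separate proof, treating it as an immediate consequence of Lemma~\ref{lem:shuffle1_and_3_equal_for_languages} applied for each $n \ge 1$ and combined via the union in Definition~\ref{def:iterated_versions}. Nothing further is needed.
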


Hence, with Proposition~\ref{prop:iterated_variant_non_cf},
we can deduce that $\itshuffle{(abc)}{3} = \itshuffle{(abc)}{4}\cap a^*b^*c^* = \{ a^m b^m c^m \mid m \ge 0\}$
and so even for finite languages, the iterated shuffles
yield languages that are not context-free.

We find that Lemma~\ref{lem:intreg}, Proposition~\ref{prop:trio_shuffles_closures}, Proposition~\ref{prop:hom_shuffles}
and Proposition~\ref{prop:chomsky_hierarchy_closures} also hold for the third and fourth shuffle variant.
To summarize:

\begin{propositionrep} Let $i \in \{3,4\}$. Then:
\begin{enumerate}
    \item  If $\mathcal L$ is a family of languages closed under intersection
     with regular languages, isomorphic mappings and (general) shuffle, then $\mathcal L$
     is closed under $\shuffle_i^n$ for $i \in \{3,4\}$ and each $n \ge 1$.
    \item If $\mathcal L$ is a full trio, then $\mathcal L$
    is closed under shuffle if and only if it is closed under~$\shuffle_i^n$.
    \item For regular $L \subseteq \Sigma^*$, it is decidable
    if $\shuffle_i^n(L, \ldots, L) \subseteq L$.
    
    \item For $L \subseteq \Sigma^*$ and the homomorphism $h : (\Sigma \cup \$)^* \to \Sigma^*$
    given by $h(x) = x$ for $x\in \Sigma$ and $h(\$) = \varepsilon$,
    we have $L^{\shuffle,*} = h(\itshuffle{h^{-1}(L)}{i})$.
    
    \item The families of regular, context-sensitive, recursive and recursively
    enumerable languages are closed under $\shuffle_i^n$ and the families of context-sensitive, recursive and recursively enumerable
    languages are closed for $\itshuffle{L}{i}$.
\end{enumerate}
\end{propositionrep}
\begin{proof}
\begin{enumerate}
    \item   Suppose $\mathcal L$ is closed under intersection with regular
 languages and shuffle.
 The method of proof is similar to the one used in the proof of Lemma~\ref{lem:intreg}.
 Again, for the sake of notational simplicity, we shall give the proof, which
 is perfectly general, in the special case $n = 3$.
 Let $L_1, L_2, L_3 \in \mathcal L$ be three languages over the disjoint
 alphabets $\Sigma_1, \Sigma_2, \Sigma_3$.
 For a permutation $\pi \in \mathcal S_3$ set
 $$
  S^{\pi} = ( (\Sigma_{\pi(1)} \Sigma_{\pi(2)})^* \cup (\Sigma_{\pi(1)} \Sigma_{\pi(3)})^* \cup (\Sigma_{\pi(2)} \Sigma_{\pi(3)})^* )
      (\Sigma_1^* \cup \Sigma_2^* \cup \Sigma_3^*)
 $$
 with the shorthand notation $S = S^{\operatorname{id}}$. Also, set
 $$
  T =  ( (\Sigma_1 \Sigma_2 \cup \Sigma_2 \Sigma_1)^* \cup 
         (\Sigma_1 \Sigma_3 \cup \Sigma_3 \Sigma_1)^* \cup 
         (\Sigma_2 \Sigma_3 \cup \Sigma_3 \Sigma_2)^* )
       (\Sigma_1^* \cup \Sigma_2^* \cup \Sigma_3^*)
 $$
 and 
 \begin{align*}
     R_3 & = \bigcup_{\pi \in \mathcal S_3} (\Sigma_{\pi(1)} \Sigma_{\pi(2)} \Sigma_{\pi(3)})^* S^{\pi}, \\ 
     R_4 & = \left( \bigcup_{\pi \in \mathcal S_3} \Sigma_{\pi(1)} \Sigma_{\pi(2)} \Sigma_{\pi(3)}) \right)^* T.
 \end{align*}
 Then
 \begin{align*}
     \shuffle_3^3(L_1, L_2, L_3) & = (L_1 \shuffle L_2 \shuffle L_3) \cap R_3, \\
     \shuffle_1^4(L_1, L_2, L_3) & = (L_1 \shuffle L_2 \shuffle L_3) \cap R_4.
 \end{align*}
 Hence, the can write all our shuffle variants with the general shuffle and intersection
 by a suitable regular language. 
 
\item  First suppose $\mathcal L$ is a full trio that is closed under shuffle.
 Then, by the previous statement it is closed
 under each operation $\shuffle_n^i$ for $i \in \{3,4\}$.
 Conversely, suppose $\mathcal L$ is closed under $\shuffle_i^n$
 for $i \in \{3,4\}$ with $n \ge 2$.
 We distinguish several cases. Note~\cite{DBLP:journals/actaC/Berard87,DBLP:journals/tcs/Berard87} 
     that, for the homomorphism $h : (\Sigma \cup \{\$\})^* \to \Sigma^*$
     given by $h(x) = x$ for $x \in \Sigma$ and $h(\$) = \varepsilon$ with $\$ \notin \Sigma$,
     we have
     \begin{equation}\label{eqn:init_shuffle_hom_shuffle2}
      L_1 \shuffle L_2 = h(\shuffle_1^2(h^{-1}(L_1), h^{-1}(L_2))). 
     \end{equation}
 \begin{enumerate}
 \item[(i)] $i = 3$.
    
     Let $L_1, L_2 \in \mathcal L$. We have $\shuffle_3^2(L_1, L_2) = \shuffle_1^2(L_1, L_2) \cup \shuffle_1^2(L_2, L_1)$.
     Hence, by Equation~\eqref{eqn:init_shuffle_hom_shuffle2},
     \begin{align*} 
      & h(\shuffle_3^2(h^{-1}(L_1), h^{-1}(L_2))) \\
       & = h(\shuffle_1^2(h^{-1}(L_1), h^{-1}(L_2))) \cup h(\shuffle_3^2(h^{-1}(L_2), h^{-1}(L_1))) \\ 
       & = L_1 \shuffle L_2 \cup L_2 \shuffle L_1 \\
       & = L_1 \shuffle L_2.
     \end{align*}
     This show the claim for $n = 2$.
     For $n > 2$, as $\shuffle_3^2(L_1, L_2, \{\varepsilon\}, \ldots, \{\varepsilon\}) = \shuffle_3^2(L_1, L_2)$,
     the claim is implied as it is valid for $n = 2$.
     
 \item[(ii)] $i = 4$.
 
   We have $\shuffle_1^2(h^{-1}(L_1), h^{-1}(L_2)) \subseteq \shuffle_4^2(h^{-1}(L_1), h^{-1}(L_2))$.
   So, by Equation~\eqref{eqn:init_shuffle_hom_shuffle2}, Proposition~\ref{prop:properties_shuffles_comm} and Lemma~\ref{lem:shuffle_hom_inclusion},
   \begin{align*}
       L_1 \shuffle L_2 & \subseteq h(\shuffle_4^2(h^{-1}(L_1), h^{-1}(L_2)) \\ 
                        & \subseteq h(h^{-1}(L_1) \shuffle h^{-1}(L_2)) \\
                        & \subseteq h(h^{-1}(L_1) \shuffle h(h^{-1}(L_2)) \\
                        & \subseteq L_1 \shuffle L_2.
   \end{align*}

 \end{enumerate}
 So, in each case we can represent the shuffle, which gives our claim.
 
\item  For two regular language, we can effectively construct an automaton 
 accepting their shuffle. Hence, we can effectively
 construct an automaton for $L \shuffle \ldots \shuffle L$ ($n$ times).
 As an automaton for the intersection with another regular language
 is effectively constructible, by the method of proof of part (1),
 we find that an automaton for $\shuffle_i^n(L, \ldots, L)$
 is effectively constructible. Because the inclusion problem
 for regular languages is decidable, we 
 can decide if $\shuffle_i^n(L, \ldots, L) \subseteq L$
 holds true. 
 
\item  By Proposition~\ref{prop:properties_shuffles_comm},
 for $i \in \{3, 4\}$,
 $\itshuffle{h^{-1}(L)}{i} \subseteq (h^{-1}(L))^{\shuffle,*}$.
 Hence, using Lemma~\ref{lem:shuffle_hom_inclusion},
 we can conclude $h(\itshuffle{h^{-1}(L)}{i}) \subseteq h((h^{-1}(L))^{\shuffle,*})
 \subseteq (h(h^{-1}(L)))^{\shuffle,*} \subseteq L^{\shuffle,*}$.
 Similarly, by Proposition~\ref{prop:properties_shuffles_comm},
 we find that for $i \in \{2,3,4\}$
 we have
 $$
  \itshuffle{h^{-1}(L)}{1} \subseteq  \itshuffle{h^{-1}(L)}{i} \subseteq h^{-1}(L)^{\shuffle,*}
 $$
 and so, with Lemma~\ref{lem:shuffle_hom_inclusion},
 $$ 
  h(\itshuffle{h^{-1}(L)}{1}) \subseteq  h(\itshuffle{h^{-1}(L)}{i}) \subseteq L^{\shuffle,*}.
 $$
 By Proposition~\ref{prop:hom_shuffles},
 we have $L^{\shuffle,*} \subseteq h(\itshuffle{h^{-1}(L)}{1})$ and the claim follows.
 
\item We refer to the construction done in the proof of Proposition~\ref{prop:chomsky_hierarchy_closures}
 and only indicate were it has to be modified. The argument for closure of each class
 under $\shuffle_3^n$ and $\shuffle_4^n$ is the same.
 
 \medskip 

\noindent\underline{The case $i = 3$}:  By Lemma~\ref{lem:shuffle1_and_3_equal_for_languages} this reduces to the case $i = 1$.
 
\medskip 

\noindent\underline{The case $i = 4$}: Let $u$ be the input.
 First, non-deterministically, guess a number of operands $1 \le n \le |u|$.
 Then, construct a sequence by setting $n_1 = n$
 and, if $n_i$ was choosen and $|u| - (n_1 + \ldots + n_i) > 0$,
 guess $1 \le n_{i+1} \le \min\{ |u| - (n_1 + \ldots + n_i), n_i \}$. 
 This sequence corresponds to a structuring of the input $u$
 in the first $n_1$ symbols, then the next $n_2$ symbols and so on.
 This sequence could be stored by marking the input, for example by copying
 it on an additional working tape and set marking symbols.
 Then, inside each marked factor of $u$, apply some permutation to it.
 Afterward, erase all markings.  
 Then, operate as for the case $i = 1$, but use the altered string as
 input. Similar to the case $i = 1$, we see that all closure properties are
 satisfied.

\medskip 

These constructions, together with those from
the proof of Proposition~\ref{prop:chomsky_hierarchy_closures},
show closure under all iterated shuffle variants.  
We have used closure properties to show closure under $\shuffle_i^n$
for $i \in \{1,\ldots,4\}$ above. But note that we could also use the referred constructions,
but with $n$ fixed instead of a guessed value.  
\end{enumerate} 
\end{proof}

Lastly, we give two examples.

\begin{example}
\label{ex:iterated_shuffles}
 Set $L = \{ab,ba\}$. Define the homomorphism $g : \{a,b\}^* \to \{a,b\}^*$
 by $g(a) = b$ and $g(b) = a$, i.e., interchanging the symbols.
 \begin{enumerate}
 \item $\itshuffle{L}{1} = \itshuffle{L}{3} = \{ u g(u)  \mid u \in \{a,b\}^* \}$.
  %
  %
  \begin{proof}
   Let $u_1, \ldots, u_n \in L$ and $w = \shuffle_1^n(u_1, \ldots, u_n)$. Then
   $w = I(u_1, \ldots, u_n)$. As $|u_1| = \ldots = |u_n| = 2$,
   we can write $w = x_1 \cdots x_{2n}$
   with $x_i \in \{a,b\}$ for $i \in \{1,\ldots, 2n\}$.
   By Definition~\ref{def:interleaving_operator}
   of the $I$-operator, we have, for $i \in \{1,\ldots, n\}$,
   $u_i = x_i x_{i+n}$.
   So, if $u_i = ab$, then $x_i = a$ and $x_{i+n} = b$,
   and if $u_i = ba$, then $x_i = b$ and $x_{i+n} = a$.
   By Corollary~\ref{cor:itshuffle_1_3_equal}, $\itshuffle{L}{1} = \itshuffle{L}{3}$.
   \end{proof}

 \item $\itshuffle{L}{4} = \{ uv \mid u,v \in \{a,b\}^*, |u|_a = |v|_b, |u|_b = |v|_a \}$.
 
  \begin{proof} Let $u_1, \ldots, u_n \in L$ and $w = \shuffle_4^n(u_1, \ldots, u_n)$.
   Then, using the inductive Definition~\ref{def:shuffle_4} twice, we find $w = uv$,
   where $u$ contains all the first symbols of the arguments $u_1, \ldots, u_n$ in some order,
   and $v$ all the second symbols. Hence, for each $a$ in $u$, we must have a $b$ in $v$
   and vice versa. So, $w$ is contained in the set on the right hand side.
   Conversely, suppose $w = uv$ with $|u|_a = |v|_b$ and $|u|_b = |v|_a$.
   Then set $n = |u| = |v|$, $u = x_1 \cdots x_n$, $v = y_1 \cdots y_n$
   with $x_i, y_i \in \Sigma$ for $i \in \{1,\ldots, n\}$. We can reorder
   the letters to match up, i.e., an $a$ with a $b$ and vice versa. More specifically,
   we find a permutation $\pi \in \mathcal S_n$
   such that
   $w = I(x_{\pi(1)} y_1, \ldots, x_{\pi(n)} y_n) \in \shuffle_4^n(x_1 y_1, \ldots, x_n y_n)
   \in \itshuffle{L}{4}$. 
  \end{proof}
 \end{enumerate}
\end{example}

\section{Conclusion and Summary} 
 
The literal and the initial literal shuffle were introduced with the idea to describe
the execution histories of step-wise synchronized processes.
However, a closer investigation revealed that they only achieve this for
two processes, and, mathematically, the lack of associativity of these
operations prevented usage for more than two processes.
Also, iterated variants derived from this non-associative binary operations
depend on a fixed bracketing and does not reflect the synchronization of $n$
processes.
The author of the original papers~\cite{DBLP:journals/tcs/Berard87,DBLP:journals/actaC/Berard87} 
does not discuss this issue, but is more concerned
with the formal properties themselves.
Here, we have introduced two operations that lift
the binary variant to an arbitrary number of arguments, hence allowing
simultaneous step-wise synchronization of an arbitrary number of processes.
We have also introduced iterative variants, which are more natural than
the previous ones for the non-associative binary operations.
In summary, we have
\begin{enumerate}
    \item investigated the formal properties and relations between our shuffle variant
     operations,
    \item we have found out that some properties are preserved in analogy to the binary
     case, but others are not, for example commutativity, see Proposition~\ref{prop:properties_shuffles};
    \item we have shown various closure or non-closure properties for the family of languages
     from the Chomsky hierarchy and the recursive languages;
    \item used one shuffle variant to characterize the family of finite languages;
    \item in case of a full trio, we have shown that their expressive power coincides
     with the general shuffle, and, by results from~\cite{DBLP:journals/tcs/Berard87,DBLP:journals/actaC/Berard87},
     with the initial literal and literal shuffle;
    \item we have investigated various decision problems, some of them
     are tractable even if an analogous decision problem with the general shuffle operation
     is intractable. However, we have also identified an intractable decision problem
     for our second $n$-ary shuffle variant.
\end{enumerate}

As seen in Proposition~\ref{prop:np_problem} for the \NP-complete decision problem, we needed
an alphabet of size at least three. For an alphabet of size one, i.e., an unary alphabet, 
the $n$-ary shuffle
for any of the variants considered reduces to the ($n$-times) concatenation
of the arguments, which is easily computable. Then, deciding if the result
of this concatenation is contained in a given regular language could be done in polynomial
time.
So, a natural question is if the problem formulated in Proposition~\ref{prop:np_problem}
remains \NP-complete for binary alphabets only. Also, it is unknown 
what happens if we alter the problem by not allowing a finite language represented by a deterministic automaton as input,
but only a single word, i.e., $|L| = 1$. For the general shuffle, this problem
is \NP-complete, but it is open what happens if we use the second $n$-ary variant.
Also, it is unknown if the problem remains \NP-complete if we represent the input not
by an automaton, but by a finite list of words.

\medskip \noindent {\textbf{Acknowledgement.} I thank  anonymous reviewers
of a previous version for feedback and remarks that helped to improve the presentation. I also thank the reviewers of the current version
for careful reading and pointing out typos and some unclear formulations. 
Due to the strict page limit, I cannot put all proofs into the paper. I have supplied proof sketches where possible. Also, an extended version with all the missing proofs is in preparation.
}



\bibliographystyle{psc}
\bibliography{ms} 


\end{document}